\documentclass{article}

\usepackage{arxiv}

\usepackage[utf8]{inputenc}
\usepackage{amssymb}
\usepackage{csquotes}
\usepackage{underscore}
\usepackage{graphicx}
\usepackage{xspace}
\usepackage{amsmath}
\usepackage{amsthm}
\usepackage{hyperref}
\newtheorem{definition}{Definition}
\newtheorem{proposition}{Proposition}

\newtheorem{example}{Example}

\hypersetup{%
    pdfcreator = {},%
    pdfproducer = {}%
}%

%\usepackage{todonotes}
%\usepackage[disable]{todonotes}
%\newcommand{\note}[3]{\todo[prepend,size=\tiny,color=#2\colorBright/,caption={#1}]{#3}}
%\newcommand{\cmtPK}[1]{\note{PK}{\colorPK/}{#1}}
%\newcommand{\cmtAR}[1]{\note{AR}{\colorAR/}{#1}}
%\newcommand{\cmtMF}[1]{\note{MF}{\colorMF/}{#1}}
%\newcommand{\cmtWH}[1]{\note{WH}{\colorWH/}{#1}}
%\newcommand{\cmtWD}[1]{\note{WD}{\colorWD/}{#1}}
%\newcommand{\noteinline}[3]{\todo[
%    inline,
%    caption={},
%    color=#2\colorBright/,
%    size=\footnotesize
%] {#1: #3}}
%\newcommand{\cmtPKi}[1]{\noteinline{PK}{\colorPK/}{#1}}
%\newcommand{\cmtARi}[1]{\noteinline{AR}{\colorAR/}{#1}}
%\newcommand{\cmtMFi}[1]{\noteinline{MF}{\colorMF/}{#1}}
%\newcommand{\cmtWHi}[1]{\noteinline{WH}{\colorWH/}{#1}}
%\newcommand{\cmtWDi}[1]{\noteinline{WD}{\colorWD/}{#1}}

%\usepackage{changes}
%\definechangesauthor[name={Willem},color=\colorWH/]{WH}
%\definechangesauthor[name={Werner},color=\colorWD/]{WD}
%\definechangesauthor[name={Martin},color=\colorMF/]{MF}
%\definechangesauthor[name={Astrid},color=\colorMF/]{AR}
%\definechangesauthor[name={Paul},color=\colorPK/]{PK}
%\newcommand{\caut}[1]{\textcolor{red}{#1}}
%\newcommand{\cmt}[1]{\textcolor{blue}{#1}}
%\renewcommand{\caut}[1]{#1}
%\renewcommand{\cmt}[1]{#1}

\usepackage{paralist}
\usepackage[shortlabels,inline]{enumitem}
\usepackage{algorithm}
\usepackage[noend]{algpseudocode}
\usepackage{eucal}
\def\sectionautorefname~{Sect.~}
\def\subsectionautorefname~{Sect.~}
\def\propositionautorefname~{Prop.~}

\def\figureautorefname~{Fig.~}
\def\algorithmautorefname~{Alg.~}
\def\exampleautorefname~{Ex.~}
\def\definitionautorefname~{Def.~}
\newcommand{\lref}[1]{L.~\ref{#1}}

\algrenewcommand\algorithmiccomment[2][\footnotesize]{{#1 \hfill\(\triangleright\)
\textit{#2}}}

\newcommand{\True}{\texttt{true}\xspace}

\algrenewcommand\algorithmicrequire{\textbf{Input:}}

% hack for Willems operator

\let\Until\undefined

\def\colorBright/{!40}
\def\colorPK/{orange}
\def\colorAR/{blue}
\def\colorMF/{yellow}
\def\colorWH/{green}
\def\colorWD/{red}

\newcommand{\carA}{A\xspace}
\newcommand{\carB}{B\xspace}

%Logic operators

\providecommand{\liff}{\leftrightarrow}

%Simple modal logic
\usepackage{wasysym}
\newcommand{\nec}{{\mathord\Box}}
\newcommand{\pos}{{\ensuremath{\mathord{\Diamond}}}}
\def\::{{:}}

%LTL operators

\newcommand{\Globally}{\ensuremath{{\mathord\mathrm{G}}}}
\newcommand{\Next}{\ensuremath{{\mathord\mathrm{X}}}}
\newcommand{\Until}{\ensuremath{{\mathord\mathrm{U}}}}
\newcommand{\Past}{{\mathord\mathrm{P}}}
\newcommand{\Since}{{\mathord\mathrm{S}}}
\newcommand{\Finally}{\ensuremath{{\mathord\mathrm{F}}}}
\newcommand{\History}{{\mathord\mathrm{H}}}

%Axioms
\newcommand{\Axiom}[1]{\ensuremath{\mathbf{#1}}}
\newcommand{\System}[1]{\ensuremath{\mathrm{#1}}}

%Semantics
\newcommand{\StateSpace}{\ensuremath{S}}
\newcommand{\state}{\ensuremath{s}\xspace}
\newcommand{\Points}{\ensuremath{\mathcal{R}}}

%Language of Systemgraph
\newcommand{\BeliefEntities}{\ensuremath{\mathcal E}}
\newcommand{\BeliefAtoms}{\ensuremath{\mathcal E_A}}
\newcommand{\Vars}{\ensuremath{\mathcal V}}
\newcommand{\Interp}{\ensuremath{\pi}}
\newcommand{\atom}{\ensuremath{e}\xspace}
\newcommand{\systemgraph}{justification graph\xspace}
\newcommand{\systemgraphs}{justification graphs\xspace}
\newcommand{\SystemGraph}{Justification Graph\xspace}
\newcommand{\SystemGraphs}{Justification Graphs\xspace}

\newcommand{\act}{\textit{act}}

\newcommand{\sysStates}{\StateSpace}
\newcommand{\world}{\ensuremath{M}\xspace}

\newcommand{\spred}{\ensuremath{\psi}}
\newcommand{\run}{\ensuremath{r}\xspace}
\newcommand{\proc}{\textit{A}\xspace}
\newcommand{\procB}{\textit{B}\xspace}
\newcommand{\Env}{\textit{Env}\xspace}
\newcommand{\Edg}{\textit{T}\xspace}
\newcommand{\history}{\ensuremath{\mathbb{H}}\xspace}
\newcommand{\Bel}{\ensuremath{\mathcal{B}}\xspace}
\newcommand{\Just}{\ensuremath{\mathbb{J}}\xspace}
\newcommand{\goalweight}{\ensuremath{\ensuremath{w}}}

\newcommand{\comp}{\textit{r}\xspace}

\newcommand{\Act}{\ensuremath{\mathcal{A}}\xspace}
\newcommand{\props}{\Vars\xspace}

\newcommand{\obs}{\textit{b}}
\newcommand{\strat}{\ensuremath{\delta}\xspace}
\newcommand{\labelE}{\ensuremath{\lambda}}
\newcommand{\labelS}{\Interp}

\newcommand{\goalp}{\ensuremath{\varphi}}
\newcommand{\WorldSet}{\ensuremath{\mathcal{M}}}
\def\conbel/{\ensuremath{\mathcal{M}}}
\def\bel/{\ensuremath{M}}
\def\worlditem/{\ensuremath{\baseinfoset/_M}}
\def\worlditemA/{\ensuremath{\worlditem/}}
\def\partialworlditem/{\ensuremath{\widetilde{M}}}
\def\worldset/{\ensuremath{\baseinfoset/_\mathcal{M}}}
\def\worldsetA/{\ensuremath{\worldset/}}
\def\po/{\ensuremath{\leq_{2}}}
\def\baseinfo/{\ensuremath{\sigma}}
\def\baseinfoset/{\ensuremath{\Sigma}}
\def\baseinfosetset/{\ensuremath{\zeta}}
\def\stratset/{\ensuremath{\Delta}}
\def\stratsetA/{\ensuremath{\stratset/_{A}}}
\def\stratsetB/{\ensuremath{\stratset/_{B}}}
\def\strategy/{\ensuremath{\strat{}}}
\def\strategyA/{\ensuremath{(\strategy/_{A}, \strategy/_{B}')}}
\def\strategyB/{\ensuremath{(\strategy/_{A}', \strategy/_{B})}}
\def\strategyAB/{\ensuremath{(\strategy/_{A}, \strategy/_{B})}}
\def\goalset/{\ensuremath{\GoalSet{}}}
\def\goalsetmax/{\ensuremath{\goalset/^{\mathit{max}}}}
\def\goalsetA/{\ensuremath{\goalset/_{A}}}
\def\goalsetB/{\ensuremath{\goalset/_{B}}}
\def\goalsetAmax/{\ensuremath{\goalset/_{A}^{\mathit{max}}}}
\def\goalsetBmax/{\ensuremath{\goalset/_{B}^{\mathit{max}}}}
\def\goal/{\ensuremath{\goalp{}}}
\def\vobst/{\ensuremath{v_{\text{obstacle}}}}
\def\vfast/{\ensuremath{v_{B,\text{fast}}}}
\def\vslow/{\ensuremath{v_{B,\text{slow}}}}
\def\pistate/{\ensuremath{\state{}}}
\def\obsset/{\ensuremath{\mathcal{O}}}
\def\obs/{\ensuremath{O}}
\def\ob/{\ensuremath{o}}
\def\piact/{\ensuremath{\Act{}}}
\def\piactA/{\ensuremath{\piact/_{A}}}
\def\piactB/{\ensuremath{\piact/_{B}}}
\def\pivars/{\ensuremath{\Vars{}}}
\def\pivarspart/{\ensuremath{V}}
\def\pivar/{\ensuremath{v}}
\def\pitransrel/{\ensuremath{E}}
\def\pitrans/{\ensuremath{e}}
\def\pilabele/{\ensuremath{\labelE{}}}
\def\seqB/{\ensuremath{\gamma}}
\def\action/{\ensuremath{\upsilon}}
\def\pijustset/{\ensuremath{E}}
\def\pijust/{\ensuremath{e}}
\def\lb{\ensuremath{\left\lbrace}}
\def\rb{\ensuremath{\right\rbrace}}
\def\sep/{\par\bigskip\hrule\par\bigskip}
\def\level/{\ensuremath{(C_{i})}}
\def\levelA/{\ensuremath{(C_{1})}}
\def\levelB/{\ensuremath{(C_{2})}}
\def\levelC/{\ensuremath{(C_{3})}}
\def\levelD/{\ensuremath{(C_{4})}}
\def\pibelentities/{\BeliefEntities{}}
\def\pibelentity/{\ensuremath{e}}
\def\pijusts/{\ensuremath{\Just{}}}
\def\time/{\ensuremath{t}}
\def\maxTime/{\ensuremath{T}}
\def\pihistset/{\ensuremath{\history{}}}
\def\pihist/{\ensuremath{h}}
\def\maxcw/{PossibleWorlds}
\def\resolve/{Resolve}
\def\maxcwt/{\textsc{\maxcw/}}
\def\resolvet/{\textsc{\resolve/}}
\def\conflict/{\ensuremath{\mathcal{C}}}

\def\pusht/{Push}
\def\popt/{Pop}
\def\findstratt/{FindStrategy}
\def\findstrattt/{\textsc{\findstratt/}}
\def\getjust/{GetJustifications}
\def\getjustt/{\textsc{\getjust/}}
\def\testt/{TestIfNotWinning}
\def\testtt/{\textsc{TestIfNotWinning}}
\def\fixt/{FixConflict}
\def\fixtt/{\textsc{FixConflict}}
\def\breakt/{\textbf{break}}

\def\stratfunctA/{StratA}
\def\stratfunctB/{StratB}
\def\stratfuncttA/{\textsc{\stratfunctA/}}
\def\stratfuncttB/{\textsc{\stratfunctA/}}

\newcommand{\pirun}[1]{\ensuremath{\func{r}{#1}}}

\newcommand{\func}[2]{\ensuremath{#1\!\left(#2\right)}}

\newcommand{\ag}{\ensuremath{e}\xspace}

\newcommand{\GoalSet}{\ensuremath{\Phi}\xspace}
\newcommand{\MaxGoalSet}{\ensuremath{\Phi^{\textsf{max}}}\xspace}
\newcommand{\myparagraph}[1]{\paragraph{#1.}}

\title{Justification Based Reasoning in Dynamic Conflict Resolution%
  \thanks{%
    This work is partly supported by the German Research
    Council (DFG) as part of the PIRE SD-SSCPS project
    (Science of Design of Societal Scale CPS,
    grant no. DA 206/11-1, FR 2715/4-1) and the
    Research Training Group SCARE
    (System Correctness under Adverse Conditions, grant no. DFG GRK 1765).
  }
}

\author{%
  Werner Damm
  \quad
  Martin Fr{\"a}nzle
  \quad
  Willem Hagemann
  \quad
  Paul Kr{\"o}ger
  \quad
  Astrid Rakow\\
  Department of Computing Science\\
  University of Oldenburg, Germany\\
  \texttt{\{werner.damm, martin.fraenzle, willem.hagemann, paul.kroeger, a.rakow\}@uol.de}
}

\begin{document}

\maketitle

\title{Justification Based Reasoning in Dynamic Conflict Resolution}
  
\begin{abstract}

    We study conflict situations that dynamically arise in traffic scenarios,
    where different agents try to achieve their set of goals and have to decide
    on what to do based on their local perception. We distinguish several types
    of conflicts for this
    setting. In order to enable modelling of conflict situations and the reasons
    for conflicts, we present a logical framework that adopts concepts from
    epistemic and modal logic, justification and temporal logic.  Using this
    framework, we illustrate how conflicts can be identified and how we
    derive a chain of justifications leading to this conflict. We discuss how
    conflict resolution can be done when a vehicle has local, incomplete
    information, vehicle to vehicle communication (V2V) and partially
    ordered goals.

\end{abstract}

\section{Introduction}
As humans are replaced by autonomous systems, such systems must be able to
interact with each other and resolve dynamically arising  conflicts. Examples of such conflicts arise when a car wants to enter the highway in dense traffic or simply when a car wants to drive faster than the preceding. 
Such \enquote{conflicts} are pervasive in road traffic and although traffic rules define a jurisdictional frame, the decision, e.g., to give way, is not uniquely determined but influenced by a list of prioritised goals of each system and the personal preferences of its user. 
If it is impossible to achieve all goals simultaneously, autonomous driving systems (ADSs) have to decide \enquote{who} will \enquote{sacrifice} what goal in order to decide on their manoeuvres. 
Matters get even more complicated when we take into account that the ADS has only partial information. 
It perceives the world via sensors of limited reach and precision. Moreover, measurements can be contradicting. An ADS might use V2V to retrieve more information about the world, but it inevitably has a confined insight to other traffic participants and its environment. 
Nevertheless, for the acceptance of ADSs, it is imperative to implement conflict resolution mechanisms that take into account the high dimensionality of decision making. These decisions have to be explained and in case of an incident, the system's decisions have to be accountable.

In this paper we study conflict situations as dynamically occurring in road traffic and develop a formal notion of conflict between two agents. 
We distinguish several types of conflicts and propose a conflict resolution process where the different kinds of conflicts are resolved in an incremental fashion. 
This process successively increases the required cooperation and decreases the privacy of the agents, finally negotiating which goals of the two agents have to be sacrificed. 
We present a logical framework enabling the analysis of conflicts.
This framework borrows from epistemic and modal logic in order to accommodate the bookkeeping of evidences used during a decision process. The framework in particular provides a mean to summarise consistent evidences and keep them apart from inconsistent evidences. We hence can, e.g., fuse compatible perceptions into a belief $b$ about the world and fuse another set of compatible perceptions to a belief $b'$ and model decisions that take into account that $b$  might contradict $b'$.
Using the framework we illustrate how conflicts can be explained and algorithmically analysed as required for our conflict resolution process. Finally we report on a small case study using a prototype implementation (employing the Yices SMT solver \cite{Yices}) of the conflict resolution algorithm.

\myparagraph{Outline}
In \autoref{sec:conflict} we introduce the types of conflict on a running example and develop a formal notion of conflict between two agents.
We elaborate on the logical foundations for modelling and analysing conflicts and the logical framework itself in \autoref{sec:justifications}. 
We sketch our case study on conflict analysis in \autoref{sec:casestudy} and outline in \autoref{sec:overview} an algorithm for analysing conflict situations as requested by our resolution protocol and for deriving explanation of the conflict for the resolution.
Before drawing the conclusions in \autoref{sec:conclusion}, we discuss related work in
\autoref{sec:related}.

\section{Conflict}\label{sec:section2}\label{sec:conflict}
 Already in 1969 in the paper \enquote{Violence, Peace and Peace Research} \cite{Galtung} J. Galtung presents his theory of the \emph{Conflict Triangle}, a framework used in the study of peace and conflict. Following this theory a conflict comprises three aspects:
 opposing \emph{actions}, incompatible \emph{goals}, inconsistent \emph{beliefs} (regarding the reasons of the conflict, knowledge of the conflict parties,\ldots). 
 
We focus on conflicts  that arise dynamically between two agents in road traffic. 
 We develop a characterisation of \emph{conflict} as a situation where one agent can accomplish its goals with the help of the other, but both agents cannot accomplish all their goals simultaneously and the agents  have to decide what to do based on their local beliefs.
In \autoref{formal} we formalise our notion of conflict. 
For two agents with complete information, we may characterise a conflict as:
Agents \proc and \procB are in conflict, if
\begin{enumerate*}[nosep]
	\item \proc would accomplish its set of goals $\GoalSet_\proc$, if \procB will do what \proc requests, while
	\item \procB would accomplish its set of goals $\GoalSet_\procB$, if \proc will do what \procB requests, and
	\item it is impossible to accomplish the set of goals $\GoalSet_\proc\cup \GoalSet_\procB$.
	\end{enumerate*}
	A situation where \proc and \procB both compete to consume the same resource is thus an example of a conflict situation.
	Since we study conflicts from the view-point of an agent's beliefs, we also consider believed conflicts, which can  be resolved by sharing information regarding the others observations, strategies or goals. 
%
%
%
%We consider a conflict as \emph{resolved} when a decision of what to do is derived.
To resolve a conflict we propose a sequence of steps  that require an increasing level of cooperation and decreasing level of privacy -- the steps require to reveal information or to constrain acting options. 
Our resolution process defines the following steps:
\begin{itemize}[nosep]
    \item[\levelA/]  Shared situational awareness
    \item[\levelB/]  Sharing strategies
    \item[\levelC/]  Sharing goals
    \item[\levelD/]  Agreeing on which goals to sacrifice and which strategy to follow	    
\end{itemize}
Corresponding to \levelA/ to \levelD/, we introduce different kinds of conflicts on a running example -- a two lane highway, where one car, \carA, is heading towards an obstacle at its lane and at the lane to its left a fast car, \carB, is approaching from behind (cf. \autoref{fig:conf}). 
An agent has a prioritised list of goals (like 1. \enquote{collision-freedom}, 2.\enquote{changing lane} and 3. \enquote{driving fast}). We assume that an agent's goals are achievable. 
\begin{figure}
	\begin{minipage}{0.5\textwidth}
		\centering
\includegraphics[width=0.65\textwidth]{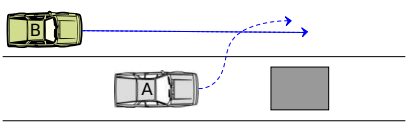}
\end{minipage}\hfill
\begin{minipage}{0.4\textwidth}
\caption{Car \proc wants to circumvent the obstacle (grey box). Car \carB is approaching from behind.}\label{fig:conf}
\end{minipage}
\end{figure}

An agent \proc has a set of \emph{actions} $\act_\proc$ and exists within a world. At a time the world has a certain state. The world \enquote{evolves} (changes state) as determined by the chosen actions of the agents within the world and events determined by the environment within the world.  
The agent perceives the world only via a set of \emph{observation predicates}, that are predicates whose valuation is determined by an observation of the agent. 
Without an observation the agent has no (direct) evidence for the valuation of the respective observation predicate.
\begin{example}\label{ex:evidence}
	Let car \proc want to change lane. It perceives that it is on a
	two lane highway, the way ahead is free for the next 500\,m and
	\procB is approaching. Let \proc perceive \procB's speed via
	radar. That is \proc makes the observation  \emph{\texttt{car\,\procB is fast}}
	justified by the evidence \emph{\texttt{radar}}.
	We annotate this briefly as  \emph{\texttt{radar:car\,\procB is fast}}.
	Further let \proc derive from lidar data that \procB is slow --
	\emph{\texttt{lidar:car\,\procB is slow}}.
\end{example}
In this situation we say agent \proc has \emph{contradicting evidences}.
Certain evidences can be combined without contradiction and others not. 
We assume that an agent organises its evidences in maximal consistent sets (i.e., \emph{\systemgraphs} of \autoref{sec:justifications}), where
each represents a set of possible worlds:
\begin{example}\label{ex:worlds}
	There are possible worlds of \proc where it is on a  two lane highway, the way ahead is free for the next 500\,m and  \procB is slowly approaching. Analogously \proc  considers possible worlds where \procB is fast. The state of the world outside of its sensors' reach is unconstrained.
\end{example}
Observing the world (for some time), an agent \proc assesses what it can do to achieve its goals in all possible worlds. That is, \proc tries to find a \emph{strategy}  that guarantees to achieve its goals in all its possible worlds. 
A strategy determines at each state the action of the agent -- the agent decides for an action based on its beliefs formed in the past regarding its possible worlds.
If there is one such strategy for \proc to accomplish its goals $\GoalSet_\proc$, then \proc has a (believed) winning strategy for $\GoalSet_\proc$. This strategy might not be winning in the "real" world though, e.g., due to misperceptions. 
\begin{example}\label{ex:winning}
	Let \proc want to drive slowly and comfortably. \proc wants to avoid collisions and it assumes that also \procB wants to avoid collisions. 
	Although \proc has contradicting evidences on the speed of \procB and hence believes that it is possible that \enquote{\procB is fast} and also that \enquote{\procB is slow}, it can follow the strategy to stay at its lane and wait until \procB has passed. This strategy is winning in all of \proc's possible worlds. 
\end{example}

Even when \proc has no believed winning strategy, it can have a winning strategy for a subset of possible worlds. Additional information on the state of world might resolve the conflict by eliminating possible worlds. We call such conflicts \emph{observation-resolvable} conflicts.
\begin{example}\label{ex:information}
	Let \proc want to change lane to circumvent the obstacle. It is happy to change directly after \procB but only if \procB is fast. 
	If \procB is slow, it prefers to change before \procB passed. 
	Further let \proc have contradicting evidences on the speed of \procB. 
	\proc considers a conflict with \procB possible in some world and hence has no believed winning strategy. Now it has to resolve its inconsistent beliefs.
	Let \procB tell \proc, it is fast, and \proc trust \procB more than its own sensors, then \proc might update its beliefs by dismissing all worlds where \procB is slow. Then \enquote{changing after \procB passed} becomes a believed winning strategy. 
\end{example}
In case of inconsistent evidences, as above, \proc has to decide how to update its beliefs. The decision how to update its beliefs will be based on the analysis of justifications (cf. \autoref{sec:justifications}) of (contradicting) evidences. The \texttt{lidar} contradicts the \texttt{radar} and \texttt{\procB} reports on its speed. Facing the contradiction of evidences justified by \texttt{lidar} and \texttt{radar} \proc trusts the evidence justified by \procB. 

Let the agents already have exchanged observations and \proc still have no believed winning strategy. A conflict might be resolved by communicating part of the other agent's (future) strategy:
\begin{example}\label{ex:strategy}
	Let \proc want to change lane. 
	It prefers to change directly after \procB, if \procB passes \proc fast. 
	Otherwise, \proc wants to change in front of \procB.
	Let \procB so far away that \procB might decelerate, in which case it might slow down so heavily that \proc would like to change in front of \procB even if \procB  currently is fast.

	Let \proc believe \enquote{\procB is fast}. Now \proc has no believed winning strategy, as \procB might decelerate. 
	According to (C2), information about parts of the agent's strategies are now communicated. \proc asks \procB whether it plans to decelerate. Let \procB be cooperative and tell \proc that it will not decelerate. Then \proc can dismiss  all worlds where \procB slows down and \enquote{changing after \procB passed} becomes a believed winning strategy for \proc.
\end{example}
Let the two agents have performed steps \levelA/ and \levelB/, i.e., they exchanged missing observations and strategy parts, and still \proc has no winning strategy for all possible worlds.
\begin{example}\label{ex:goals}
	Let now, in contrast to \autoref{ex:strategy}, \procB not tell \proc whether it will decelerate. Then step (C3) is performed. So \proc asks \procB to respect \proc's goals. Since \proc prefers \procB to be fast and \procB agrees to adopt \proc's goal as its own,  \proc can again dismiss  all worlds where \procB slows down. 
\end{example}
Here the conflict is resolved by communicating goals and the agreement to adopt the other's goals. So an agent's strategy might change in order to support the other agent.  We call this kind of conflicts \emph{goal-disclosure-resolvable} conflicts.

The above considered conflicts can be resolved by some kind of information exchange between the two agents, so that the sets of an agent's possible worlds is adapted and in the end all goals $\GoalSet_\proc$ of \proc and $\GoalSet_\procB$ of \procB are achievable in all remaining possible worlds. The price to pay for conflict resolution is that the agents will have to reveal information.
Still there are cases where simply not all goals are (believed to be) achievable. 
In this case \proc and \procB have to negotiate which goals $\GoalSet_{AB}\subseteq \GoalSet_\proc\cup \GoalSet_\procB$ shall be accomplished. 
While some goals may be compatible, other goals are conflicting. 
We hence consider goal subsets $\GoalSet_{\proc\procB}$ of $\GoalSet_\proc\cup \GoalSet_\procB$ for which a combined winning strategy for \proc and \procB exists to achieve $G_{\proc\procB}$.
We assume that there is a weight assignment function $w$ that assigns a value to a given goal combination $2^{\GoalSet_\proc\cup \GoalSet_\procB}\rightarrow \mathbb{N}$ based on which decision for a certain goal combination is taken.
This weighting of goals reflects the relative value of goals for the individual agents. 
Such a function will have to reflect, e.g., moral, ethics and jurisdiction.
\begin{example}\label{ex:neg}
	Let \proc's and \procB's highest priority goal be collision-freedom, reflected in goals $\varphi_{\proc,col}$ and $\varphi_{\procB,col}$. 
	Further let \proc want to go fast $\varphi_{\proc,fast}$ and change lane immediately $\varphi_{\proc,lc}$. 
	Let also \procB want to go fast $\varphi_{\procB,fast}$, so that \proc cannot change immediately.
	Now in step (C4) \proc and \procB negotiate what goals shall be accomplished. 
	In our scenario collision-freedom is valued most, and \procB's goals get priority over \proc's, since \procB is on the fast lane. Hence our resolution is to agree on a strategy accomplishing $\{\varphi_{\proc,col},\varphi_{\procB,col}, \varphi_{\procB,fast}\}$, which is  the set of goals  having the highest value among all those for which a combined winning strategy exists.
\end{example}

Note that  additional agents are captured as part of the environment here.
At each step an agent can also decide to negotiate with some other agent than \procB in order to resolve its conflict.

\subsection{Formal Notions}\label{formal}
In the following we introduce basic notions to define a conflict. 
Conflicts, as introduced above, arise in a wide variety of system models, but we consider in this paper only a propositional setting. 

Let $f_1:X\to Y_1$,\dots, $f_n:X\to Y_n$, and $f:X\to
Y_1\times\ldots\times Y_n$ be functions. We will write
$f=(f_1,\ldots,f_n)$ if and only if $f(x)=(f_1(x),\ldots,f_n(x))$ for
all $x\in X$. Note that for any given $f$ as above the decomposition into
its components $f_i$ is
  uniquely determined by the projections of $f$ onto the corresponding
codomain.

Each agent $\proc$ has a set of actions $\Act_\proc$. 
The sets of actions of two agents are disjoint.
To formally define a (possible)  world model of an agent $\proc$,
let $\sysStates$ be a set of states and $\props$ be a set of propositional variables.
$\props$ represents the set of belief propositions.
A state $\state\in\sysStates$ of a (possible) world is labelled with a subset $V\subset\props$ that is (assumed to be) true. $\props\setminus V$ is (assumed to be) false.

A (possible) world model $\world$ for an agent $\proc$ 
is a transition system over $\sysStates$ with designated initial state
and current state, all states are labelled with the belief propositions 
that hold at that state and transitions labelled with actions 
$\langle \act_\proc, \act_\procB, \act_\Env \rangle$ with $\act_\proc\in
\Act_\proc$ an action of agent \proc, $\act_\procB\in \Act_\procB$ an action
of agent \procB and $\act_\Env\in\Act_\Env$ an action of the environment.

The set of actions of an agent includes send and receive actions via which 
information can be exchanged, the environment guarantees to transmit a send 
message to the respective receiver.
Formally a possible world is $\world_{\proc} = (\sysStates, \Edg, \labelE,\labelS, \state_*, \state_c)$
with 
\begin{itemize}
	\item $\Edg \subseteq  \sysStates \times \sysStates$, 
	\item $\labelE: \Edg \rightarrow \Act_\proc \times \Act_\procB \times  \Act_\Env$,
	\item $\labelS: \sysStates \rightarrow 2^{\props}$,  
	\item $\state_* \in \sysStates$,  $\state_c \in \sysStates$ s.t.
		\begin{itemize}
			\item $\forall \state\in\sysStates$: $(\state,\state_*)\not\in \Edg$\hfill ($\state_*$ is the initial state)
			\item $\exists \state_1,\ldots,\state_{n+1} \in\sysStates$:$\forall 1\leq i \leq n$ $(\state_i,\state_{i+1})\in \Edg$,  $\land$ $\state_1=\state_*$ $\land$ $\state_{n+1}=\state_c$\\\null\hfill (the current state $\state_c$ is reachable from the initial state)\\ 
				$\land$	$((\state_i,\state') \in \Edg$ $\Rightarrow$ $\state'=\state_{i+1})$\hfill ( \world is linear between $\state_*$ and $\state_c$)
		\end{itemize}
	\end{itemize}

	The part of \world between $\state_*$ and $\state_c$ represents the \emph{history} of the current state. A finite \emph{run} in \world is a sequence of states $\run=\state_1\state_2\ldots\state_{n+1}$ with $\forall 1 \leq i\leq n: (s_i,s_{i+1})\in\Edg$.
	
There is one \enquote{special} world model that represents the ground truth, i.e.,  it reflects how the reality evolves.
An agent $\proc$ considers several worlds possible at a time.
This is, at each state \state of the real world, $\proc$ has a set of possible worlds $\WorldSet_\proc(\state)$.
The real world changes states according to the actions of $\proc$, $\procB$ and $\Env$. 
The set of possible worlds $\WorldSet_\proc(\state)$ changes to  $\WorldSet_\proc(\state')$ due to the passing of time and due to belief updates triggered by  e.g. observations. 
For the scope of this paper though, we do not consider the actual passing of time, but study the conflict analysis at a single state of the real world from the point of view of an agent.
Since at each state an agent \proc may consider several worlds possible, it may also consider several histories possible. 
A strategy is hence a function $\strat_{\proc}:2^{(2^\props)^{*}} \rightarrow \Act_\proc$, that determines an action for \proc based on the set of possible histories. 
$\mathbb{H}\in 2^{(2^\props)^{*}}$ represents a set of histories, where a history $h\in\mathbb{H}$ is given via the sequence of valuations of $\props$ along the path from $s_{*}$ to $s_{c}$. 
The set of possible histories at state \state  is the union of histories of possible worlds $\world_{\proc}\in\WorldSet_\proc(\state)$, denoted as $\mathbb{H}(\WorldSet_\proc(\state))$. 

Let $\run=\state_0\state_1\ldots\state_n$ be a run in $\world_\proc$ and $\gamma=\upsilon_1\upsilon_2\ldots\upsilon_n \in (\Act_\procB\times\Act_\Env)^n$ be a sequence of actions of agent $\procB$ and $\Env$ along $\run$. 
$\run$ follows strategy {\strat}, i.e., $\run=\comp(\strat, \gamma)$,  if $\labelE(s_{i-1},s_{i})=(\strat(\history(\WorldSet_\proc(\state_{i-1})),\upsilon_i),\forall 0\leq i\leq n$. We also write $\comp(\strat,\world_\proc)$ to denote the set of runs of $\world_\proc$ that follow \strat. 
We use linear-time temporal logic (LTL) to specify goals (cf. Def.~\autoref{def:model}). 
For a run $\comp$ and a goal (or a conjunction of goals) $\goalp$, we
write  $\comp \models \goalp$ if $\comp$ satisfies
$\goalp$, i.e., the valuation of propositions along the state sequence satisfies $\goalp$.\footnote{We assume that runs are infinite here. In case of finite runs, we make them infinite by repeating the last state infinitely often.}
We say \strat is a (believed) winning strategy for $\goalp$ in
$\world_{\proc}$, if all runs $\run$ of $\world_\proc$ that follow \strat also satisfy $\goalp$, $\forall \run\in\comp(\world_\proc,\strat):\run\models\goalp$.  We say that $\strat$ is a \emph{(believed) winning strategy} of \proc for $\goalp$ at the real world state \state if $\strat$ is a winning strategy for $\goalp$ in all possible worlds $\world_\proc\in\WorldSet_\proc(\state)$.

An agent $\proc$ has a set of goals $\GoalSet$ and a weight
assignment function $\goalweight_\proc:2^\GoalSet\rightarrow\mathbb{N}$ that
assigns values to a given goal combination. 
We write  $\comp\models \GoalSet$ as shorthand for
$\comp\models\bigwedge_{\goalp\in\GoalSet}\goalp$.
We  say $\GoalSet'\subset\GoalSet$ is a \emph{believed achievable goal} at real world \state if there is a strategy \strat, that is winning for the conjunction of all goals $\goalp\in\GoalSet'$ in all possible worlds $\world_\proc\in\WorldSet_\proc(\state)$.  We say $\GoalSet'\subset\GoalSet$ is a \emph{believed maximal goal} at real world state \state if its is a believed achievable goal and for all believed achievable goals $\GoalSet''\subset\GoalSet$ it holds that $w(\GoalSet')\geq w(\GoalSet'')$. 
The empty subgoal is defined to be true ($\top$).
For each world possible $\world_\proc\in\WorldSet_\proc(\state)$ agent \proc also has
\begin{compactenum}
	\item beliefs on the goals of $\procB$, $\GoalSet_\procB(\world_\proc)$, and 
	\item beliefs on the importance of subgoals of $\GoalSet_\procB(\world_\proc)$ to $\procB$, $\goalweight_\procB(\world_\proc)$, and 
	\item a set $\Just(\world_\proc)$ of justifications  for $\world_\proc$, $\GoalSet_\procB(\world_\proc)$ and $\goalweight_\procB(\world_\proc)$. 
\end{compactenum}
So at state \state of the real world an agent \proc has belief $\Bel(\proc,\state)=\bigcup_{\world_\proc\in\WorldSet_\proc(\state)}(\world_\proc,\GoalSet_\procB(\world_\proc), \goalweight_\procB(\world_\proc), \Just(\world_\proc))$.
The justifications support decision making by keeping track of (source or more generally meta) information. They hence can influence decisions on how to update an agent's knowledge, how to negotiate and what resolutions are acceptable.

Our notion of conflict captures the following concept: Let $\MaxGoalSet$ be the set of maximal goals that $\proc$ beliefs it can achieve with the help of $\procB$. But since $\procB$ might choose a strategy to accomplish some of its maximal goals, $\proc$ believes that it is in a conflict with $\procB$, if it cannot find one winning strategy that fits all possible strategy choices of $\procB$.
\begin{definition}[Believed Possible Conflict]\label{def:conflict}
	Let $\MaxGoalSet_\proc$ be the set of maximal subgoals of
	$\proc$ at state \state for which a
believed winning strategy
$(\strat_\proc,\strat'_\procB):(2^{\props_\proc})^{*} \to
\Act_\proc\times\Act_\procB$ in $\WorldSet_\proc$ exists.

Agent $\proc$ believes at state \state it is in a possible conflict with
$\procB$, if
for each of its winning strategies $(\strat_\proc,\strat'_\procB):(2^{\props_\proc})^{*} \to
\Act_\proc\times\Act_\procB$ for a maximal subgoal
$\GoalSet_\proc\in\MaxGoalSet_\proc$,
\begin{itemize}[nosep,leftmargin=*]
	\item\label{i:bangle}  
	  there is a strategy
		$(\strat'_\proc,\strat_\procB):(2^{\props_\proc})^{*}
		\to \Act_\proc\times\Act_\procB$ and a possible world $\world\in\WorldSet_\proc$
		such that  $(\strat'_\proc,\strat_\procB)$ is a winning strategy in $\world$ for
		$\GoalSet_\procB$, a believed maximal subgoal of 
		the believed goals of  $\procB$ in $\world$.
	      \item\label{i:comb} but 
		$(\strat_\proc,\strat_\procB)$
		is not a winning strategy for $\GoalSet_\proc\cup\GoalSet_\procB$ in $\world_\proc$.
\end{itemize}
\end{definition}
The above notion of conflict captures that $\proc$ analyses the situation within its possible worlds $\WorldSet_\proc(\state)$. It assumes that $\procB$ will follow some winning strategy to accomplish its own goals, while \Env is assumed to behave fully adversarial. $A$ believes that $B$ beliefs that one of A's possible worlds represents the reality.  
It is an interesting future extension to also allow $\proc$ having more  complicated beliefs about the beliefs of $\procB$, as already well supported by the logical framework introduced in \autoref{sec:justifications}. For instance we can capture situations like $A$ considers it possible that there is an obstacle on the road, while it believes that $B$ believes there is no obstacle. This extension does change the base line of our contribution but makes the following presentation more complex. So we refrain from considering beliefs about beliefs for the sake of comprehensibility.

For an example of the conflict notion, consider a situation where $\proc$ drives on a highway side by side of $\procB$ and $\proc$ just wants to stay collision-free, $\proc$ does not believe to be in a conflict situation when it believes that $\procB$ also prioritizes collision-freedom, since $\procB$ will not suddenly choose to crash into $\proc$ which would violate its own goal. But in case $\procB$ has no strategy to accomplish collision-freedom (assume a broken car in front of $\procB$) within $\world_\proc$, then $\proc$ assumes that \procB behaves arbitrarily (achieving its remaining goal $\top$) and $\proc$ believes to be in conflict with \procB.

\subsection{Applying the Formal Notion}
In this subsection we consider the formal notions introduced in the previous subsection and illustrate them -- focusing on the examples given at the start of this section.

\paragraph{Propositional Characterisation of the World} For the sake of a small example, let us consider the following propositional characterisation of a world:
For each agent $X\in\{\proc,\procB\}$ there is a pair of variables $(l_{X},p_X)$ storing its position in the road. Further each agent drives a certain speed $s_X$ abstracted to three different levels, $s_X\in\{0,1,2\}$ encoding slow, medium and fast speed levels.  We consider only time bounded properties. The evolution along the observed time window is captured via copies of $(l_{X},p_X, s_X)$, $(l_{X,t},p_{X,t},s_{X,t})$ where $0\leq t\leq \textit{max\_obs\_time}$ encodes the observed time points.
Each agent $X$ can change lane, encoded by increasing or decreasing $l_X$, and choose between three different speeds, that is, (i) decelerate inducing a change from fast to medium, or, medium to slow, respectively, or (ii) accelerates from slow to medium, or, from medium to fast, respectively. 

\paragraph{A Real World Model} In this setting each state of the real
world model is labelled with propositions $\{l_{X},p_{X},s_{X}\mid
X\in\{\proc,\procB,o\}\}$ and there are transitions from a state $\state$
to a state $\state'$ labelled
\labelE(s,s')=($lc_1, lc_2, lc_3,$ $sc_1, sc_2, sc_3$), where
$lc_i\in\{\texttt{lane\_change},\neg\texttt{lane\_change}\}$ and
$sc_i\in\{a,d,k\}$. 
$lc_i$ encodes whether agent X$_i$ chooses to
perform a lane change and sc$_i$ encodes how X$_i$ chooses to change its speed, i.e., to accelerate, decelerate or to keep its speed. 
The target state is labelled according to effect of the chosen action.

The initial state encodes the start situation (of the tour) and the
subgraph from the initial state to the current state captures the observed past. 
A world model has a branching structure from the current state
towards the future into the possible different options of lane changing
and choices of speed change. Such a world model describes the past, the current state of the world and
possible future evolutions.
For each point in time $t$ there is hence such a world model. See
\autoref{fig:realworld} for a sketch of an example of a world model at a time $t=4$. 

\begin{figure}
	\begin{minipage}{0.35\textwidth}
\includegraphics[width=0.9\textwidth]{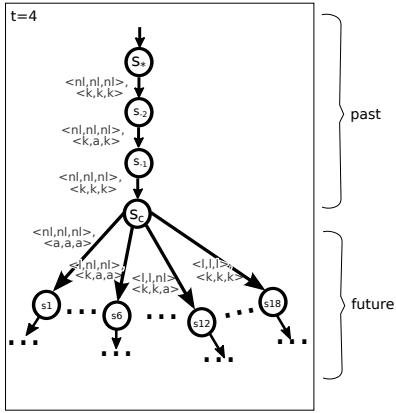}
\end{minipage}
\hfill
\begin{minipage}{0.6\textwidth}
\caption{The transition labelling \labelE\ is sketched within the
  figure itself. The state labelling is omitted there. Let us assume
  that the initial state \labelS($s_*$) is labelled with
  $\{l_\proc=1$, $l_\procB=2$, $l_o=1$, $p_\proc=3$, $p_\procB=1$,
  $p_o=7$, $s_\proc=medium$, $s_\procB=medium$, $s_o=slow\}$ describing
the situation of \autoref{fig:conf}. Currently we are at the time
$t=4$.
\proc, \procB and the environment (determining the moves of the
obstacle) have done three moves: (1) all three stayed at their respective
lane and kept their speed, (2) the same but \procB accelerates and (3)
same as (1). The state labelling reflects the changes induced by the
chosen moves. So the propositions that are true at, e.g., $s_{-2}$
differ from the one of $s_*$ only in terms of the
respective positions:  $\{l_\proc=1$, $l_\procB=2$, $l_o=1$,
$p_\proc=4$, $p_\procB=4$, $p_o=7$, $s_\proc=medium$,
$s_\procB=medium$, $s_o=slow\}$.}\label{fig:realworld}
\end{minipage}
\end{figure}

\paragraph{Possible Worlds} Additional to labelling of states and transitions, the real world is also
labelled with beliefs of the agents at that time. 
The gist is
\begin{enumerate}
  \item[R] the real
world model captures the past, presence and the possible futures at a time $t$. 
\item[B] At time $t$ an agent within world model \world considers a set of worlds
possible. This belief is justified by e.g. evidences from its sensors.
\end{enumerate}
An example is sketched in
\autoref{fig:beliefs}, where only \proc's beliefs are sketched. 
Note that the state labelling, i.e. the set of true propositions,  is not
specified in \autoref{fig:beliefs} in order to declutter the figure.
Some state labelling is given in \autoref{fig:beliefs2}.
\begin{figure}
	\begin{minipage}{0.6\textwidth}
\includegraphics[width=0.9\textwidth]{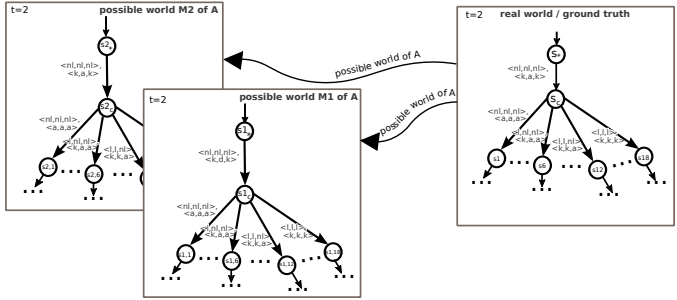}
\end{minipage}
\hfill
\begin{minipage}{0.39\textwidth}
\caption{The real world to the right is associated with beliefs of agent
  \proc. \proc considers at time $t=2$ two worlds as possible, one, M1, bisimilar
  to the real world and a second one, M2, where
\procB accelerates as its first move.}\label{fig:beliefs}
\end{minipage}
\end{figure}
Let  for \autoref{fig:beliefs} be the initial states of the real and the possible worlds be identically labelled, i.e., the agent believes in the "real" past.

Let us now consider  \autoref{ex:evidence}. Agent  \proc has evidence for
\emph{\procB being
fast} and it also has evidence for \emph{\procB being slow}.
\autoref{fig:beliefs2} illustrates that agent \proc considers the two
(sets of)
worlds possible that differ in the
valuation of the respective state propositions.
Agent \proc believes that a world is possible where B is fast -- this is justified by its
radar data--, and \proc considers a world possible where B is
slow -- justified by its lidar. 

We assume that an agent considers any world \world possible that can be
justified by some non-empty set of consistent evidences. 
So a possible world \world satisfies e.g. a set of constraints that is derived from the agent's observations, i.e.,
the sensory evidences, and it also has to be compatible to the agent's laws/rules about
the world, e.g., physical laws.

The evidences provided by radar and lidar in \autoref{ex:evidence}
imply constraints $s_\proc=fast$ and $s_\procB=slow$. 
These constraints are contradictory and hence there is no
possible world that satisfies both constraints. So there
cannot be an arrow in \autoref{fig:beliefs2} from the real world to a
possible world that is labelled with a justification set containing both justifications,
\emph{\Just = $\{$radar, lidar$\}$}. Nevertheless, the radar and lidar evidences
justify that agent \proc believes in alternative worlds (e.g., B is
fast, so it is possible that (a) B was driving at medium speed and accelerated or (b) B was
fast and kept its speed.)
\begin{figure}
	\begin{minipage}{0.6\textwidth}
\includegraphics[width=0.9\textwidth]{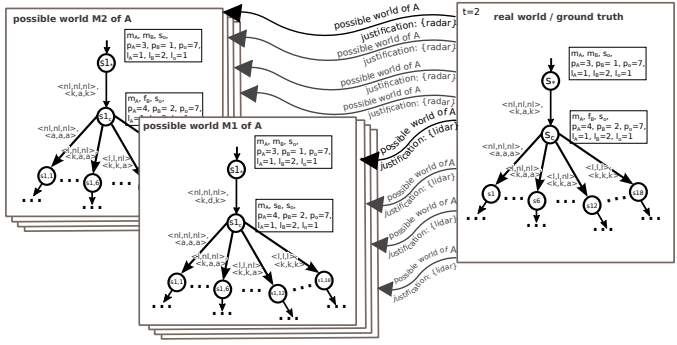}
\end{minipage}
\hfill
\begin{minipage}{0.39\textwidth}
\caption{\proc considers currently two sets of worlds possible, one set
contains all possible worlds where B is slow in accordance to the lidar
 and all worlds in the other set satisfy that B is fast in
accordance to the the radar.}\label{fig:beliefs2}
\end{minipage}
\end{figure}
\paragraph{Strategy and Possible Worlds}
Let us formalise \autoref{ex:winning}. 
\proc considers worlds possible where it has the evidence \texttt{radar:car\,\procB is fast} and hence considers worlds possible where \procB is fast, and also worlds where B is not fast, due to its evidence \texttt{lidar:car\,\procB is slow}. 
We already sketched the possible worlds of \proc above. 
In order to specify the goals of \proc and the goals \proc believes \procB has, we use the usual LTL operators\footnote{\enquote{\Finally} denotes the \textsf{finally} modal operator, \enquote{\Globally} denotes \textsf{globally}, \enquote{\Until} denotes \textsf{until} and in addition we use $\Finally_{\leq t}\varphi$ to express that within the next $t$ steps $\varphi$ has to be true and likewise $\Globally_{\leq t}\varphi$ to specify that at all times up to $t$ $\varphi$ has to hold.}

\proc wants to drive slowly and comfortably, $\varphi_{cf}=\Globally_{\leq 3} (s_\proc=medium\lor s_\proc=slow)$ and avoid collisions $\varphi_{cl}=\Globally_{\leq 10}(p_{\proc}\not=p_{\procB}\land p_{\proc}\not=p_o)$. 
\proc also assumes that \procB wants to avoid collisions,   $\varphi_{\procB,cl}=\Globally_{\leq 10}(p_{\procB}\not=p_{\proc}\land p_{\procB}\not=p_o)$.
The weight assignment to subsets of goals for \proc is
$\goalweight_\proc(\{\varphi_{cl},\varphi_{cf}\})=2$,
$\goalweight_\proc(\{\varphi_{cl}\})=1$ and
$\goalweight_\proc(\Phi)=0$ for all all other subsets $\Phi$. This expresses that collision-freedom is indispensable. Further \proc believes collision-freedom is also indispensable for \procB.
Additionally, \proc derives from $\Phi_\procB$ a constraint that expresses that \procB will not jeopardize collision-freedom and hence it will not drive irrationally into \proc. 
This constraint further restricts the set of worlds that \proc considers possible (cf. \autoref{fig:beliefs3}).  

\begin{figure}
	\begin{minipage}{0.6\textwidth}
\includegraphics[width=0.9\textwidth]{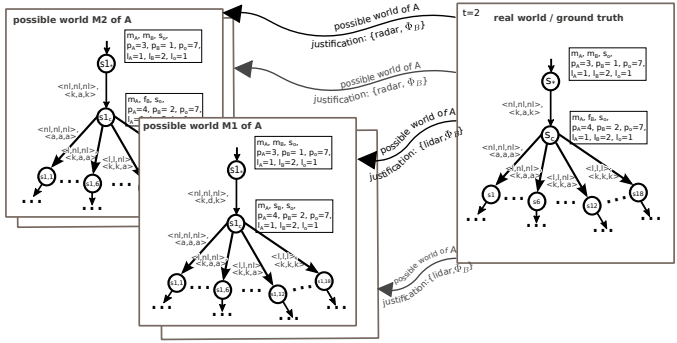}
\end{minipage}
\hfill
\begin{minipage}{0.39\textwidth}
\caption{\proc derived additional constraints from \procB goals that constrain its set of possible worlds.}\label{fig:beliefs3}
\end{minipage}
\end{figure}

In this situation, \proc decides on its next move. 
It is not aware of the state of real world and decides only based on its current beliefs regarding the possible worlds and associated goals of \procB and goal weights. 
\proc determines that staying on its current lane and not changing its speed now is a good move since it can stop and wait in any case, i.e., this move is the prefix of a winning strategy in M1 and all other possible worlds, in which \procB is slow, and also in M2 and all other possible worlds that satisfy that \procB is fast (cf. \autoref{fig:beliefs2}).

\paragraph{Conflicts}
In  \autoref{ex:information} \proc has the goals \begin{itemize*} 
	\item avoid collisions $\varphi_{cl}=\Globally_{\leq 10}(p_{\proc}\not=p_{\procB}$ $\land$ $p_{\proc}\not=p_o)$ and
	\item change lane $\varphi_{lc}=\Finally_{\leq 5}$ \textit{change\_lane} and 
	\item change lane before \procB has passed, if \procB is slow, $\varphi_{flc}=\neg(\Globally_{\leq 3} s_\procB=fast)$ $\Rightarrow$ $p_\proc\geq p_\procB$ $\Until$ $\textit{change\_lane}$ $\land$ $\Finally_{\leq 3}\textit{change\_lane}$ and 
	\item do not change before \procB has passed, if \procB is fast, $\varphi_{slc}=(\Globally_{\leq 3} s_\procB=fast)$ $\Rightarrow$ $\Globally_{\leq 3} \neg\textit{change\_lane}$. 
\end{itemize*} We assume here that \proc has only short term goals and global goals are determined at a higher level.\footnote{Note that also collision-freedom might be sacrificed in so-called dilemma situations.} 
\proc also assumes that \procB wants to avoid collisions.
 The weight assignment to subsets of goals for \proc is specified as follows
 $6=\goalweight_\proc(\{\varphi_{cl},\varphi_{lc},\varphi_{flc},\varphi_{slc}\})$ $>$ $\goalweight_\proc(\{\varphi_{cl},\varphi_{lc},\varphi_{slc}\})$ $=$ $\goalweight_\proc(\{\varphi_{cl},\varphi_{lc},\varphi_{flc}\})$ $>$
 $\goalweight_\proc(\{\varphi_{cl},\varphi_{lc}\})$ $>$
 $\goalweight_\proc(\{\varphi_{cl},\varphi_{flc}\})$ $=$ $\goalweight_\proc(\{\varphi_{cl},\varphi_{slc}\})$\footnote{Note, that $\varphi_{flc}$ does not imply $\varphi_{lc}$.} $>$ $\goalweight_\proc(\{\varphi_{cl},\})=1$,
$\goalweight_\proc(\Phi)=0$ for all all other subsets.

Obviously \proc has a winning strategy  $\strat_{\proc(\procB)}:2^{(2^{\props})^{*}} \rightarrow \Act_\proc\times\Act_\procB$, i.e., if it could determine also \procB's future moves. 
In this case it can achieve $\{\varphi_{cl},\varphi_{lc},\varphi_{flc},\varphi_{slc}\}$. 
If \proc assumes that \procB follows a strategy achieving \procB's own goals under the assumption that \proc will cooperate (i.e. \procB can rule out that \proc changes lane, forcing \procB to decelerate), then \procB can e.g. make up a winning strategy  $\strat_{(\proc)\procB}:2^{(2^{\props})^{*}} \rightarrow \Act_\proc\times\Act_\procB$, where \proc stays at lane 1, \procB at lane 2 and \procB chooses its speed arbitrarily without endangering collision-freedom.  \proc does not have a winning strategy for all these strategies of \procB, since \proc cannot follow the same strategy if (i) \procB is fast in its next three steps and if (ii) \procB is not fast in at least one of the next three steps. 

If \procB tells \proc how fast it will go in its next three steps, the additional information provided by \procB, makes \proc dismiss all possible worlds that do not satisfy the evidence on \procB's future behaviour. \proc can determine an appropriate strategy for all (remaining) possible worlds and the conflict situation is hence resolved.

\section{Epistemic Logic, Justifications and \SystemGraph}\label{subsec:justifications}\label{sec:justifications}
Conflict analysis demands to know who believes to be in conflict with whom and what pieces of information made him belief that he is in conflict. To this end we introduce the \emph{logic of \systemgraphs } that
allows to keep track of external information and extends purely propositional formulae by so called
\emph{belief atoms} (cf. p.~\pageref{p:beliefentities}), which are used to label the sources of
information. In \autoref{sec:conflict} we already used such formulae, e.g.,
"\texttt{radar:car \procB is fast}". 
Our logic provides several atomic accessibility relations representing
justified beliefs of various sources, as required for our examples of \autoref{sec:conflict}.
It provides \systemgraphs as a mean to identify \emph{belief entities} which compose different justifications to consistent information even when the information base contains contradicting information of different sources, as required for analysing conflict situations. 

First, this section provides a short overview on epistemic modal logics
and multi-modal extensions thereof. Such logics use modal operators to 
expressing knowledge and belief stemming from different sources.
Often we will refer to this knowledge and belief as \emph{information},
especially when focusing on the sources or of the information.
Thereupon the basic principles of justifications logics are shortly reviewed.
Justification logics are widely seen as interesting variants to epistemic logics
as they allow to trace back intra-logical and external justifications of derived information.
In the following discussion it turns out that tracing back
external justifications follows the same principles as the distribution
of information over different sources.

Consequently, the concept of information source and external justification are
then unified in our variant of an epistemic modal logic.
This logic of \systemgraphs extends the modal logic by a justification graph. 
The nodes of a justification graph are called belief entities and represent groups of
consistent information. The leaf nodes of a justification graphs are called belief atoms,
which are information source and external justifications at the same time, as they are
the least constituents of external information. We provide a complete axiomatisation
with respect to the semantics of the logic of \systemgraphs.

\subsection{\SystemGraphs}
\myparagraph{Modal Logic and Epistemic Logic}
Modal logic extends the classical logic by modal operators expressing
necessity and possibility. The formula $\nec\phi$ is read as
\enquote{$\phi$ is necessary} and $\pos\phi$ is read as \enquote{$\phi$ is possible}.
The notions of possibility and necessity are dual to each other,
 $\pos\phi$ can be defined as  $\lnot\nec\lnot\phi$.
The weakest modal logic $\System{K}$ extends
propositional logic by the axiom $\Axiom{K}_\nec$ and
the necessitation rule $\Axiom{Nec}_\nec$ as
follows\\
\begin{minipage}{0.45\textwidth}
  \begin{gather*}
    \vdash\nec(\phi\to\psi)\to(\nec\phi\to\nec\psi)\text,\tag{$\Axiom{K}_\nec$}
  \end{gather*}
\end{minipage}
\hfill
\begin{minipage}{0.45\textwidth}
  \begin{gather*}
    \text{from } \vdash \phi\text{ conclude } \vdash \nec\phi\text.
    \tag{$\Axiom{Nec}_\nec$}
  \end{gather*}
\end{minipage}\vspace{1em}\\
The axiom $\Axiom{K}_\nec$ ensures that whenever $\phi\to\psi$ and $\phi$
necessarily hold, then also $\psi$ necessarily has to hold. The necessitation
rule $\Axiom{Nec}_\nec$ allows to infer the necessity of $\phi$ from
any proof of $\phi$ and, hence, pushes any derivable logical truth
into the range of the modal operator $\nec$. This principle is
also known as \emph{logical awareness}.
Various modalities like belief or knowledge can be described by
adding additional axioms encoding the characteristic properties of
the respective modal operator. 
The following two axioms are useful to model knowledge and belief:\\
\begin{minipage}[h]{0.45\textwidth}
  \begin{gather*}
    \vdash \nec\phi\to\phi\text,
    \tag{$\Axiom{T}_\nec$}
  \end{gather*}
\end{minipage}
\hfill
\begin{minipage}[h]{0.45\textwidth}
  \begin{gather*}
    \vdash \nec\phi\to\pos\phi\text.
    \tag{$\Axiom{D}_\nec$}
  \end{gather*}
\end{minipage}\vspace{1em}\\
The axiom $\Axiom{T}_\nec$ and $\Axiom{D}_\nec$ relate necessity with
the factual world. While the truth axiom $\Axiom{T}_\nec$ characterises
\emph{knowledge} as it postulates that everything which is necessary is also factual,
$\Axiom{D}_\nec$ characterises \emph{belief} as it postulates the weaker
property that everything which is necessary is also possible.
Under both axioms $\vdash \nec\bot\to\bot$ holds,
i.e.\ a necessary contradiction yields also a factual contradiction.

Multi-modal logics are easily obtained by adding several modal operators
with possibly different properties and can be used to express the information of more than one agent.
E.g., the formula $e_i\::\phi$ expresses that 
the piece of information $\phi$ belongs to the modality $e_i$. 
Modal operators can also be used to
represent modalities referring to time.
E.g., in the formula $\Next\phi$ the temporal modality $\Next$ expresses
that $\phi$ will hold in the next time step. An important representative of a
temporal extension is linear temporal logic ($\System{LTL}$).

In multi-agent logics the notions of common information
and distributed information play an important role.
While common knowledge captures the information which is known to every agent $\ag_i$,
we are mainly interested in information that is
distributed within a group of agents $E=\{\ag_1,\dots,\ag_n\}$.
The distributed information within
a group $E$ contains any piece of information that
at least one of the agent $\ag_1$, \dots, $\ag_n$ has. Consequently, we
introduce a set-like notion for groups, where an agent $\ag$ is identified
with the singleton group $\{\ag\}$ and the expression
$\{\ag_1,\dots,\ag_n\}\::\phi$ is used to denote that $\phi$ is
distributed information within the group $E$.
The distribution of information is axiomatised by
\begin{gather*}
  \vdash E\::\phi\to F\::\phi\text{, where $E$ is a subgroup of $F$.} \tag{$\Axiom{Dist}_{E,F}$}
\end{gather*}
Note that groups may not be empty.
The \emph{modal logic for distributed information} contains for every
group $E$ at least the axiom $\Axiom{K}_E$, the necessitation rule
$\Axiom{Nec}_E$, and the axiom $\Axiom{Dist}_{E,F}$ for any group $F$ with
$E\subseteq F$.

\myparagraph{Justification Logics}
Justification logics \cite{artemov:jck:2006} are variants
of epistemic modal logics
where the modal operators of knowledge and belief are unfolded into
justification terms. Hence, justification logics allow a complete
realisation of Plato's characterisation of knowledge as justified true belief.
A typical formula of justification logic has the form $s\::\phi$,
where $s$ is a justification term built from justification constants, and it
is read as ``$\phi$ is justified by $s$''.
The basic justification logic $J_0$ results from extending propositional
logic by the application axiom and the sum axioms\\
\begin{minipage}[h]{0.45\textwidth}
  \begin{gather*}
    \vdash s\::(\phi\to\psi)\to(t\::\phi\to[s\cdot t]\::\psi)\text,
    \tag{$\Axiom{Appl}$}
  \end{gather*}
\end{minipage}
\hfill
\begin{minipage}[h]{0.50\textwidth}
  \begin{gather*}
    \vdash s\::\phi\to[s + t]\::\phi\text,\quad
    \vdash s\::\phi\to[t+s]\::\phi\text,
    \tag{$\Axiom{Sum}$}
  \end{gather*}
\end{minipage}\vspace{1em}\\
where $s$, $t$, $[s\cdot t]$, $[s+t]$, and $[t+s]$ are justification terms
which are assembled from justification constants using the operators $+$ and $\cdot$ according to
the axioms.
Justification logics tie the epistemic tradition together
with proof theory. Justification terms are reasonable
abstractions for constructions of proofs. If $s$ is a proof of
$\phi\to\psi$ and $t$ is a proof of $\phi$ then the application
axiom postulates that there is a common proof, namely $s\cdot t$, for $\psi$.
Moreover, if we have a proof $s$ for $\phi$ and some proof $t$ then
 the concatenations of both proofs, $s+t$ and $t+s$,
are still proofs for $\phi$.
In our framework we were not able to derive any meaningful
example using the sum axiom of justification logic. Therefore 
this axiom is omitted in the following discussion.

\myparagraph{Discussion}
All instances of classical logical tautologies,
like $A \lor \lnot A$ and $s\::A \lor \lnot s\::A$, are provable
in justification logics.
But in contrast to modal logics, justification logics do
not have a necessitation rule. The lack of the necessitation rule allows
justification logics to break the principle of logical awareness, as 
$s\::(A \lor \lnot A)$ is not necessarily provable for an arbitrary
justification term $s$.
Certainly, restricting the principle of logical awareness is attractive
to provide a realistic model of restricted logical resources.
Since we are mainly interested in revealing and resolving conflicts, 
the principle of
logical awareness is indispensable in our approach.

Nevertheless, justification logic can simulate unrestricted logical awareness
by adding proper axiom internalisation rules
$\vdash e\::\phi$ for all axioms $\phi$ and justification constants
$e$. In such systems a weak variant
of the necessitation rule of modal logic holds: for any derivation $\vdash \phi$
there exists a justification term $t$ such that $\vdash t\::\phi$ holds.
Since $\phi$ was derived using axioms and rules only,
also the justification term $t$ is exclusively built from justification constants
dedicated to the involved axioms. Beyond that, $t$ is hardly informative
as it does not help to reveal \emph{external} causes of a conflict.
Hence,  we omit
the axiom internalisation rule and add the modal axiom $\Axiom{K}_t$
and the modal necessitation rule $\Axiom{Nec}_t$ for any justification term $t$
to obtain a justification logic where each justification term is closed
under unrestricted logical awareness.

An important consequence of the proposed system is that $\cdot$ becomes
virtually idempotent and commutative.\footnote{%
  For any instance 
  $\vdash s\::(\phi\to\psi)\to(s\::\phi\to[s\cdot s]\::\psi)$ of $\Axiom{Appl}$ there is an instance 
  $\vdash s\::(\phi\to\psi)\to(s\::\phi\to s\::\psi)$ of $\Axiom{K}_s$ in the proposed system. Moreover, it is an easy exercise to show that any instance of 
  $\vdash s\::(\phi\to\psi)\to(t\::\phi\to[t\cdot s]\::\psi)$ is derivable in the proposed system.
}
These insights allows us to argue merely about justification groups instead
of justification terms. It turns out that a proper reformulation
of $\Axiom{Appl}$ with regard to justification groups is equivalent
to $\Axiom{Dist}_{E,F}$, finally yielding the same axiomatisation for
distributed information and compound justifications.

\myparagraph{Belief Atoms, Belief Groups, and Belief Entities}\label{p:beliefentities}
So far, we argued that assembling distributed information and
compound justifications follow the same principle. In the following
we even provide a unified concept for the building blocks of both notions.
A \emph{belief atom} $\atom$ is the least constituent of external information
in our logic. To each $\atom$ we assign the modal operator $\atom\::$. Hence,
for any formula $\phi$ also $\atom\::\phi$ is a formula saying 
\enquote{$\atom$ has information $\phi$}. 
Belief atoms play different roles in our setting.
A belief atom may represent a sensor collecting information about the state
of the  world, or it may represent certain operational rules as well
as a certain goal of the system. The characteristic property of a
belief atom is that the information of a belief atom has to be accepted
or rejected as a whole. Due to its external and indivisible nature,
$e$ is the only source of evidence for its information.
The only justification for information of $e$ is $e$ itself. Consequently,
$\atom\::\phi$ can also be read as \enquote{$\atom$ is the justification for $\phi$}.
This is what belief atoms and justifications have in common:
either we trust a justification or not.

The information of a system is distributed among its belief atoms.
The modal logic for distributed information allows us to
consider the information which is distributed over a \emph{belief group}.
While belief groups can be built arbitrarily from belief atoms, we also
introduce the concept of \emph{belief entities}. A belief entity
is either a belief atom, or a distinguished group of belief entities.
Belief entities are dynamically distinguished by a \systemgraph.
In contrast to belief groups, belief entities and belief atoms are
not allowed to have inconsistent information. Hence a \systemgraph allows us to restrict the awareness of extra-logical evidences -- so we can distinctively integrate logical resources that have to be consistent.

\myparagraph{\SystemGraphs}%(2p)
\label{sec:section1}
Let $\Vars$ be a set of propositional variables
and let
$\BeliefEntities$ be the set of belief entities. The designated
subset $\BeliefAtoms$ of $\BeliefEntities$ denotes the set of belief atoms.

\begin{definition}[Language of \SystemGraphs]
  A formula $\phi$ is in the language of \systemgraphs 
  if and only if $\phi$ is built according to the following BNF,
  where $A\in\Vars$ and $\emptyset\neq E \subseteq\BeliefEntities$:
  \begin{gather*}
    \phi ::= \bot \mid A \mid (\phi \to \phi) \mid E\::(\phi) \mid \Next(\phi) \mid \Past(\phi) \mid (\phi)\Until(\phi) \mid (\phi)\Since(\phi).
  \end{gather*}
\end{definition}
Using the descending sequence of operator precedences
($\::$, $\lnot$, $\lor$, $\land$, $\to$, $\liff$), we can define the
well-known logical connectives $\lnot$, $\lor$, $\land$ and $\liff$ from $\to$ and $\bot$.
Often, we omit brackets if the formula is still uniquely readable.
We define $\to$ to be right associative.
For singleton sets $\{e\}\subseteq\BeliefEntities$ we also write $e\::\phi$
instead of $\{e\}\::\phi$. The language allows the usage of temporal operators
for \emph{next time} ($\Next$), \emph{previous time} ($\Past$), \emph{until} ($\Until$), and \emph{since} ($\Since$). Operators like \emph{always in the future} ($\Globally$) or
\emph{always in the past} ($\History$) can be defined from the given ones.

\begin{definition}[\SystemGraph]
  A \emph{\systemgraph} is a directed acyclic graph
  $G$ whose nodes are belief entities of $\BeliefEntities$.
  An edge $e\mapsto_G f$ denotes that the belief entity $e$ has the
  \emph{component} $f$.  
  The set of all direct components of an entity $e$ is defined as
  $G(e):=\{ f \mid e \mapsto_G f \}$.

  The leaf nodes of a \systemgraph are populated by belief atoms, i.e.\
  for any belief entity $e$ it holds $e\in\BeliefAtoms$ if and only if
  $G(e)=\emptyset$.
\end{definition}

\begin{definition}[Axioms of a \SystemGraph]\label{def:axioms}
  Let $G$ be a \systemgraph. The logic of a \systemgraph
  has the following axioms and rules.
  \begin{compactenum}[(i)]
  \item As an extension of propositional logic the rule of modus ponens
    $\Axiom{MP}$ has to hold:
    from $\vdash \phi$ and $\vdash \phi\to\psi$ conclude
    $\vdash \psi$. Any substitution instance of
    a propositional tautology $\phi$ is an axiom.
  \item Belief groups are closed under logical consequence and follow
    the principle of logical awareness. Information is
    freely distributed along the subgroup-relation.
    For any belief group $E$ the axiom
    $\Axiom{K}_E$ and the necessitation rule $\Axiom{Nec}_E$ hold.
    For groups $E$ and $F$ with $E\subseteq F$ the
    axiom $\Axiom{Dist}_{E,F}$ holds.
  \item Belief entities are not allowed to have inconsistent information.
    Non-atomic belief entities inherit all information of their
    components.
    For any belief entity $e$ the axiom $\Axiom{D}_e$ holds.
    If $E$ is a subgroup of the components of $e$, then the axiom
    $\Axiom{Dist}_{E,e}$ holds.
    \label{it:inheritance}
  \item In order to express temporal relation the logic for
    the \systemgraph includes the axioms of Past-LTL (LTL
    with past operator). A comprehensive list of axioms can be found
    in \cite{lichtenstein1985}. 
  \item Information of a belief entity
	$e\in\BeliefEntities$ and time are related. The axiom
    $(\Axiom{PR}_E):\quad \vdash e\::\Past\phi \liff \Past e\::\phi$
    ensures that every belief entity $e$ correctly remembers its
    prior beliefs and establishes a principle which is also
    known as \emph{perfect recall} (e.g., see \cite{fagin:reasoning:2003}).
  \end{compactenum}
\end{definition}

\begin{definition}[Proof]
  Let $G$ be a \systemgraph. A proof (derivation) of $\phi$
  in $G$ is a sequence of formulae $\phi_1,\dots,\phi_n$ with $\phi_n=\phi$
  such that each $\phi_i$ is either an axiom of the \systemgraph or
  $\phi_i$ is obtained by applying a rule to previous members
  $\phi_{j_1},\dots,\phi_{j_k}$ with $j_1,\dots,j_k < i$.
  We will write $\vdash_G \phi$ if and only if such a sequence exists. 
\end{definition}

\begin{definition}[Proof from a set of formulae]\label{def:proof from sets}
  Let $G$ be a \systemgraph and $\Sigma$ be a set of formulae.
  The relation $\Sigma \vdash_G \phi$ holds if and only if
  $\vdash_G (\sigma_1 \land \dots \land \sigma_k) \to \phi$ for
  some finite subset $\{\sigma_1,\dots,\sigma_k\}\subseteq\Sigma$ with
  $k\geq 0$.
\end{definition}

\begin{definition}[Consistency with respect to a \systemgraph]\label{def:consistent}
  Let $G$ be a \systemgraph.
  \begin{compactenum}[(i)]
  \item A set $\Sigma$ of formulae is $G$-inconsistent
    if and only if $\Sigma \vdash_G \bot$. Otherwise,
    $\Sigma$ is $G$-consistent.
    A formula $\phi$ is $G$-inconsistent if and only if
    $\{\phi\}$ is $G$-inconsistent. Otherwise, $\phi$ is $G$-consistent.
  \item A set $\Sigma$ of formulae is maximally $G$-consistent
    if and only if $\Sigma$ is $G$-consistent and
    for all $\phi\not\in\Sigma$ the set $\Sigma\cup\{\phi\}$
    is $G$-inconsistent.
  \end{compactenum}
\end{definition}

\myparagraph{Semantics}\label{sec:semantics}
Let $\StateSpace$ be the \emph{state space}, that is the set of all possible
states of the world.
An interpretation $\Interp$ over $\StateSpace$ is a mapping that maps each state
$\state$ to a truth assignment over $\state$,
i.e.\ $\Interp(\state)\subseteq\Vars$ is the subset of all
propositional variables which are true in the state $\state$.
A \emph{run} over $\StateSpace$ is a function $r$ from the natural numbers
(the time domain) to $\StateSpace$. The set of all runs is denoted by $\Points$.

\begin{definition}\label{def:kripke for systems}
  Let $G$ be a \systemgraph. A Kripke structure $M$ for $G$ is a tuple
  $M  = (\StateSpace, \Points, \Interp, ({\mapsto_e})_{e\in\BeliefEntities})$ where
  \begin{compactenum}[(i)]
  \item $\StateSpace$ is a state space, 
  \item $\Points$ is the set of all runs over $\StateSpace$,
  \item $\Interp$ is an interpretation over $\StateSpace$,
  \item each $\mapsto_e$ in $(\mapsto_e)_{e\in\BeliefEntities}$ is an individual accessibility relation
   $\mapsto_e\subseteq\StateSpace\times\StateSpace$ for a belief entity $e$ in $\BeliefEntities$.
  \end{compactenum}
\end{definition}

\begin{definition}[Model for a \SystemGraph]\label{def:model}
  Let $M = (\StateSpace,\Points, \Interp, ({\mapsto_e})_{e\in\BeliefEntities})$ be a
  Kripke structure for the \systemgraph $G$, where 
  \begin{compactenum}[(i)]
  \item $\mapsto_e$ is a serial relation for any belief entity
    $e\in\BeliefEntities$,\label{it:serial}
  \item ${\mapsto_{E}}$ is defined as
    ${\mapsto_{E}} = \bigcap_{e\in E}{\mapsto_e}$ for any belief group
    $E\subseteq\BeliefEntities$,\label{it:distinfo}
  \item ${\mapsto_e} \subseteq {\mapsto_E}$ holds for all
    non-atomic belief entities $e\in\BeliefEntities\setminus\BeliefAtoms$
    and any subgroup $E\subseteq G(e)$.\label{it:inherit}
  \end{compactenum}
  We recursively define the model relation $(M,r(t))\models_G \phi$ as follows:
  \begin{gather*}
    \begin{array}{lcl}
      (M,r(t)) \not\models_G \bot\text{.}
      \\
      (M,r(t)) \models_G Q
      &:\Longleftrightarrow&
      Q\in\Interp(r(t))\text.
      \\
      (M,r(t)) \models_G \phi\to\psi
      &:\Longleftrightarrow&
      (M,r(t)) \models_G \phi \text{ implies }
      (M,r(t)) \models_G \psi\text.
      \\
      (M,r(t)) \models_G E\::\phi
      &:\Longleftrightarrow&
      (M, r'(t)) \models_G \phi
      \text{ for all $r'$ with $r(t')\mapsto_E r'(t')$ for all $t' \leq t$.}
      \\
      (M,r(t)) \models_G \Next\phi
      &:\Longleftrightarrow&
      (M,r(t+1)) \models_G \phi\text{.}
      \\
      (M,r(t)) \models_G \Past\phi
      &:\Longleftrightarrow&
      (M,r(t')) \models_G \phi
      \text{ for some $t'$ with $t'+1=t$.}
      \\
      (M,r(t)) \models_G \phi\Until\psi
      &:\Longleftrightarrow&
      (M,r(t')) \models_G \psi
      \text{ for some $t'\geq t$ and }\\
      &&\text{$(M,r(t'')) \models_G \phi$ for
        all $t''$ with $t\leq t''<t'$.}
      \\
      (M,r(t)) \models_G \phi\Since\psi
      &:\Longleftrightarrow&
      (M,r(t')) \models_G \psi \text{ for some } 0\leq t'\leq t
      \text{ and }\\
      &&\text{$(M,r(t'')) \models_G \phi$ for
        all $t''$ with $t'<t''\leq t$.}
    \end{array}
  \end{gather*}
  When $(M,r(t))\models_G \phi$ holds, we call $(M,r(t))$ a \emph{pointed model}
  of $\phi$ for $G$. If $(M,r(0))$ is a pointed model of $\phi$ for $G$, then
  we write $(M,r)\models_G \phi$ and say that the run $r$ satisfies $\phi$.
  Finally, we say that $\phi$ is satisfiable for $G$, denoted by
  $\models_G \phi$ if and only if there exists a model $M$ and a run $r$
  such that $(M,r)\models_G \phi$ holds.
\end{definition}

\begin{proposition}[Soundness and Completeness]
  The logic of a \systemgraph $G$ is a sound and complete axiomatisation
  with respect to the model relation $\models_G$. That is,
  a formula $\phi$ is $G$-consistent if and only if $\phi$ is satisfiable for
  $G$.
\end{proposition}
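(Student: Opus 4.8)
The statement is an equivalence between syntactic consistency and semantic satisfiability, and I would prove the two directions separately: soundness (satisfiable $\Rightarrow$ $G$-consistent), which is routine, and completeness ($G$-consistent $\Rightarrow$ satisfiable), via a Henkin-style construction adapted to the temporal setting, which is where the work lies.

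\textbf{Soundness.} It suffices, by contraposition, to show that every theorem $\vdash_G\psi$ is valid in every model for $G$: then a $G$-inconsistent $\phi$ gives $\vdash_G\neg\phi$, so $\neg\phi$ is valid and $\phi$ is unsatisfiable. One argues by induction on the length of the derivation. Propositional axioms and $\Axiom{MP}$ are immediate. For $\Axiom{K}_E$, $\Axiom{Nec}_E$ and $\Axiom{Dist}_{E,F}$ (including $\Axiom{Dist}_{E,e}$) one uses conditions (ii) and (iii) of Definition~\ref{def:model}: $\mapsto_E=\bigcap_{e\in E}\mapsto_e$ shrinks as $E$ grows, and $\mapsto_e\subseteq\mapsto_E$ for $E\subseteq G(e)$; the one point to keep straight is that the clause for $E\::$ quantifies over runs that stay $\mapsto_E$-related at \emph{every} past instant, but that is uniform across these axioms. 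For $\Axiom{D}_e$ one uses seriality of $\mapsto_e$ together with the fact that $\Points$ contains \emph{all} runs, so any pointwise choice of $\mapsto_e$-successors along a finite prefix extends to a genuine run and an $E\::$-accessible run always exists. The Past-LTL axioms are valid by the standard argument of \cite{lichtenstein1985}, and $\Axiom{PR}_E$ is handled by exploiting that the $E\::$-accessible runs are exactly those $\mapsto_e$-related at all past moments, which is the semantic content of perfect recall (cf.\ \cite{fagin:reasoning:2003}) and what lets one carry a modal witness across a $\Past$ step.

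\textbf{Completeness.} Here I would first prove a Lindenbaum lemma (the rules are finitary, so the usual enumerate-and-extend argument works with Definition~\ref{def:proof from sets}): every $G$-consistent set extends to a maximally $G$-consistent one. Given $G$-consistent $\phi$, fix a maximally $G$-consistent $\Gamma_0\ni\phi$ and build a Kripke structure $M$ for $G$ together with a distinguished run $\bar r$ with $(M,\bar r)\models_G\phi$. The accessibility relations are the canonical ones on the maximally consistent sets appearing in the construction, $\Gamma\mapsto_e\Delta$ iff $\{\psi\mid e\::\psi\in\Gamma\}\subseteq\Delta$; seriality comes from $\Axiom{D}_e$, the group relations are \emph{defined} as the intersections $\mapsto_E=\bigcap_{e\in E}\mapsto_e$, and $\Axiom{K}_E$, $\Axiom{Dist}_{E,F}$, $\Axiom{Dist}_{E,e}$ ensure the canonical modalities still match them (this is the standard distributed-knowledge step, which needs the construction to be unravelled enough that no spurious $\mapsto_E$-edges arise). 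Since the language has the strong $\Until$, the naive ``all maximally consistent sets'' model fails the eventuality requirement, so instead I would build $M$ — and in particular $\bar r$ and the runs needed to evaluate nested $E\::$-subformulas — \emph{step by step}, at each stage extending a finite approximation so that (a) temporal links hold ($\Next\psi\in\Gamma_t$ iff $\psi\in\Gamma_{t+1}$, and dually for $\Past$, using the Past-LTL axioms and, at the entity level, $\Axiom{PR}_E$), (b) every pending $\Until$-eventuality is eventually discharged by a fair schedule, and (c) whenever $e\::\psi\notin\Gamma$ a fresh $\mapsto_e$-successor omitting $\psi$ is threaded in as its own run. A Truth Lemma — for every run $r$ produced by the construction and every subformula $\psi$ of $\phi$, $\psi\in r(t)$ iff $(M,r,t)\models_G\psi$ — then follows by induction on $\psi$, and instantiating it at $\bar r$, $t=0$, $\phi$ gives satisfiability.

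\textbf{Main obstacle.} The hard part is making (a), (b) and (c) simultaneously satisfiable while keeping the construction well-founded, and in particular reconciling (c) with the perfect-recall semantics: the witnessing run introduced for a failed $e\::\psi$ at time $t$ must itself be a legitimate run agreeing with everything already committed — $\mapsto_e$-related to $\bar r$ at each instant $\le t$ and carrying its own temporal and $\Until$ obligations — and the $E\::$ clause will re-inspect such runs at all later times as well. This interplay of eventualities, predecessor/perfect-recall links, seriality and the intersection property for every group is the usual source of difficulty in completeness proofs for temporal-epistemic logics with perfect recall; the axiom $\Axiom{PR}_E$ is precisely the syntactic resource that makes the back-and-forth between a temporal step and the entity modalities derivable, so that the step-by-step construction closes. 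I expect essentially all the genuinely delicate work to be in the bookkeeping of that construction; once it is in place, soundness and completeness combine to give the stated characterisation of $G$-consistency.
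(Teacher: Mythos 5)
The paper does not actually prove this proposition: it states that soundness is straightforward, declares a self-contained completeness proof ``far beyond the page limit'', and instead situates the logic between the known complete systems $\System{K}^D_n$ and $\System{KD}^D_n$ (citing completeness for distributed knowledge and for LTL--epistemic combinations). Your outline is the standard canonical-model/step-by-step route that those citations presuppose, so in that sense you are on the same track as the paper; but your completeness argument is a list of proof obligations (Lindenbaum, canonical $\mapsto_e$, intersection property for groups, scheduling of $\Until$-eventualities, threading of fresh $e\::$-witnesses under perfect recall) rather than a construction, and you explicitly stop at exactly the point the paper calls ``lengthy sequences of various model constructions''. Neither text contains a proof; yours is the more informative sketch, and you correctly identify $\Axiom{Dist}_{E,e}$ and the inheritance condition $\mapsto_e\subseteq\mapsto_E$ as the place where the logic differs from plain $\System{K}^D_n$.

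One concrete warning about the claim that soundness is ``routine'': the only non-trivial axiom is $\Axiom{PR}_E$, and as a biconditional it does not obviously survive the paper's synchronous semantics for $E\::$ when $\phi$ contains future operators. The ``no forgetting'' direction $\Past e\::\phi\to e\::\Past\phi$ goes through as you describe, because a run related at all $t'\leq t$ is in particular related at all $t'\leq t-1$. But for the converse, take $\phi=\Next q$ at time $t=1$: $e\::\Past\Next q$ constrains $q$ only on the $\mapsto_e$-successors of $r(1)$, whereas $\Past(e\::\Next q)$ evaluates $e\::\Next q$ at time $0$, where the accessible runs are constrained only at instant $0$ and their value at instant $1$ ranges over \emph{all} states (since $\Points$ contains all runs). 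So $e\::\Past\Next q$ can hold while $\Past(e\::\Next q)$ fails. Your soundness paragraph asserts that the perfect-recall shape of the accessibility clause ``lets one carry a modal witness across a $\Past$ step'' --- that is precisely the step that breaks here, and it is the one place where you (and the paper) cannot wave at the standard literature: either $\Axiom{PR}_E$ must be weakened to the no-forgetting implication, or the semantic clause for $E\::$ must constrain accessible runs at future instants as well. Everything else in your soundness induction (propositional axioms, $\Axiom{K}_E$, $\Axiom{Nec}_E$, $\Axiom{Dist}$, $\Axiom{D}_e$ via seriality plus the fact that pointwise successor choices extend to runs) is fine.
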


While the soundness proof is straightforward, a self-contained
completeness proof involve lengthy sequences of various model
constructions and is far beyond the page limit.
However, it is well-known, (e.g., \cite{gerbrandy1998}),
that $\System{K}^D_n$,
the $n$-agent extension of $\System{K}$ with distributive information
is a sound and complete axiomatisation with respect to the class
of Kripke structures having $n$ arbitrary accessibility relations, where
the additional accessibility relations for groups are given as
the intersection of the participating agents, analogously to
Def. \ref{def:model}.(\ref{it:distinfo}).
Also the additional extension $\System{KD}^D_n$ with
$\Axiom{D}_E$ for any belief group $E$ is sound and complete
with respect to Kripke structures having serial accessibility relations,
analogously to Def. \ref{def:model}.(\ref{it:serial}).
The axioms of \systemgraph are between these two systems.
Def. \ref{def:model}.(\ref{it:inherit}) explicitly allows belief
entities to have more information than its components.
Various completeness proofs for combining LTL and epistemic logics are given
e.g., in \cite{fagin:reasoning:2003}.

\myparagraph{Extracting Justifications}
Let $\Sigma=\{\sigma_1,\dots,\sigma_n\}$ be a finite set of formulae
logically describing the situation which is object of our investigation.
Each formula $\sigma_i\in\Sigma$ encodes information of belief atoms
($\sigma_i\equiv e_i\::\phi_i$ with $e_i\in\BeliefAtoms$), facts
($\sigma_i\equiv \phi_i$ where $\phi_i$ does not contain any
epistemic modal operator), or is an arbitrary Boolean
combinations thereof. Further, let $G$ be a \systemgraph such that
$\Sigma$ is $G$-consistent and $e$ be a non-atomic belief entity of $G$.
For any formula $\phi$ we may now ask whether $\phi$ is part of the
information of $e$. If there is a proof $\Sigma \vdash_G e\::\phi$,
then $\phi$ is included in $e$'s information. To extract
a justification for $e\::\phi$ we use that
$\Sigma\cup\{\lnot e\::\phi\}$ is $G$-inconsistent and accordingly
unsatisfiable for $G$. If we succeed in extracting a minimal unsatisfiable
core $\Sigma' \subseteq \Sigma\cup\{\lnot e\::\phi\}$ a minimal
inconsistency proof can be recovered, from which finally
the used justifications are extracted.

The following proposition allows to use SAT/SMT-solvers for a restricted
setting and has been used in our case study.
\begin{proposition}[SAT Reduction]\label{prop:SAT}
  Let $\Sigma=\{\sigma_1,\dots,\sigma_n\}$ be a set of formulae such
  that each element $\sigma_i$ is of the form $e_i\::\phi_i$
  with $e_i\in\BeliefAtoms$ and $\phi_i$ does not contain any
  epistemic modal operators. Further, let $e$ be an arbitrary belief entity that does not occur in $\Sigma$.
  Then $G = \{ e \mapsto_G e_i | e_i \text{ occurs in } \Sigma\}$ is a \systemgraph for $\Sigma$
  if and only if $\Phi=\{\phi_1,\dots\phi_n\}$ is satisfiable over the
  non-epistemic fragment of the logic of \systemgraphs.
\end{proposition}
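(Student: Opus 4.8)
The plan is the following. First note that $G$ is genuinely a \systemgraph: it is a depth-one tree rooted at $e$ with the belief atoms occurring in $\Sigma$ as leaves, hence acyclic; each leaf $e_i$ has $G(e_i)=\emptyset$ and so is correctly a belief atom, while $e$ (which does not occur in $\Sigma$) is non-atomic with non-empty component set $G(e)=\{e_i\mid e_i\text{ occurs in }\Sigma\}$. Thus \enquote{$G$ is a \systemgraph for $\Sigma$} means \enquote{$\Sigma$ is $G$-consistent}, and by the preceding soundness and completeness proposition this is equivalent to the existence of a Kripke model $M$ for $G$ and a run $r$ with $(M,r(0))\models_G\sigma_i$ for all $i$. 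The heart of the matter is the interaction of the atoms $e_i$ with the fresh collecting entity $e$: every model of $G$ has $\mapsto_e\subseteq\mapsto_{e_i}$ (each $e_i$ is a component of $e$), and $e$, being a belief \emph{entity}, obeys $\Axiom{D}_e$, so information placed in $e$ must be consistent — which is exactly what will force the $\phi_i$ to be satisfiable \emph{jointly}.

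For \enquote{$\Sigma$ $G$-consistent $\Rightarrow$ $\Phi$ satisfiable over the non-epistemic fragment} I would argue semantically. Fix such an $M$ and $r$. By seriality of $\mapsto_e$ pick a state $s$ with $r(0)\mapsto_e s$; since $\mapsto_e\subseteq\mapsto_{e_i}$ in every model of $G$ (\autoref{def:model}), also $r(0)\mapsto_{e_i}s$ for each $i$. Unfolding $(M,r(0))\models_G e_i\::\phi_i$ against the runs $r'$ with $r'(0)=s$ yields $(M,r'(0))\models_G\phi_i$ for all such $r'$; since this holds for every $i$, every run passing through $s$ at time $0$ satisfies $\bigwedge\Phi$ there. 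At least one such run exists, and because $\bigwedge\Phi$ has no epistemic operator its truth on it is just non-epistemic satisfaction; hence $\Phi$ is satisfiable over the non-epistemic fragment.

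For the converse, \enquote{$\Phi$ satisfiable over the non-epistemic fragment $\Rightarrow$ $\Sigma$ $G$-consistent}, I would proceed by contraposition via a syntactic translation. Let $\mathrm{er}(\cdot)$ delete every epistemic prefix (it maps $E\::\psi$ to $\mathrm{er}(\psi)$ and commutes with $\bot$, $\to$ and the temporal operators $\Next,\Past,\Until,\Since$). One checks that $\mathrm{er}$ sends every axiom schema of the logic of $G$ to a theorem of the non-epistemic fragment and every rule to a rule of that fragment: a substitution instance of a propositional tautology remains such an instance (erasure only alters what sits at the Boolean leaves); $\Axiom{K}_E$, $\Axiom{Dist}_{E,F}$, $\Axiom{D}_e$ and $\Axiom{PR}_E$ all become instances of $\alpha\to\alpha$ or $\alpha\liff\alpha$; the imported Past-LTL axioms (\cite{lichtenstein1985}) and their rules are epistemic-operator-free and fixed by $\mathrm{er}$; $\Axiom{MP}$ is preserved; and $\Axiom{Nec}_E$ degenerates to the trivial rule \enquote{from $\alpha$ conclude $\alpha$}. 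Hence, if $\Sigma$ is $G$-inconsistent, i.e.\ $\vdash_G(\sigma_{i_1}\land\dots\land\sigma_{i_m})\to\bot$ for some finite subset, then applying $\mathrm{er}$ gives a non-epistemic derivation of $(\phi_{i_1}\land\dots\land\phi_{i_m})\to\bot$, so $\{\phi_{i_1},\dots,\phi_{i_m}\}$, and therefore $\Phi$, is unsatisfiable. (When the $\phi_i$ are propositional this direction also has a direct witness: a single state carrying a valuation of $\bigwedge\Phi$, with every $\mapsto_{e'}$ the reflexive loop, realises $\Sigma$; the translation argument just makes the reasoning uniform.)

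I expect the bulk of the work to lie in that last step — the careful check that $\mathrm{er}$ is a homomorphism between the two proof systems, in particular that stripping epistemic prefixes from a substitution instance of a propositional tautology whose \enquote{atoms} are themselves epistemic formulas leaves a substitution instance of the same tautology, and that the Past-LTL axioms and rules imported in \autoref{def:axioms} really contain no epistemic operator. It is also worth making explicit why the auxiliary $e$ cannot be omitted: over the graph consisting of the atoms $e_i$ alone the logic never relates $e_i\::\phi_i$ to $e_j\::\phi_j$ — the only common super-node would be a belief \emph{group}, for which no $\Axiom{D}$ axiom is provided — so there $G$-consistency of $\Sigma$ would amount merely to each $\phi_i$ being satisfiable on its own; it is precisely the single entity $e$ dominating all the $e_i$, equipped with $\Axiom{D}_e$, that turns this into joint satisfiability of $\Phi$.
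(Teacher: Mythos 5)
Your proof is correct, and the forward direction (a model of $\Sigma$ for $G$ yields a model of $\Phi$) is essentially the paper's own argument: seriality of $\mapsto_e$ plus the inheritance ${\mapsto_e}\subseteq{\mapsto_{e_i}}$ produce a common successor at which all $\phi_i$ hold. The converse is where you genuinely diverge. The paper argues semantically and self-containedly: it takes the non-epistemic model of $\Phi$ and extends it to a Kripke structure for $G$ by letting every accessibility relation be the identity, which is serial and trivially satisfies the inheritance condition, so each $e_i\::\phi_i$ holds at the distinguished run. You instead contrapose through a syntactic erasure translation $\mathrm{er}$ and verify it is a homomorphism of proof systems, concluding that $G$-inconsistency of $\Sigma$ forces unsatisfiability of $\Phi$. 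Both work, but note the trade-off: your route lands on \emph{$G$-consistency} of $\Sigma$, so to match the paper's semantic reading of ``$G$ is a \systemgraph for $\Sigma$'' you must invoke the completeness half of the soundness-and-completeness proposition, whose proof the paper explicitly omits; the paper's direct model construction avoids that dependency entirely. Conversely, your erasure argument is more informative (it shows the epistemic layer introduces no new inconsistencies at the level of proofs), and your parenthetical ``direct witness'' is in fact exactly the paper's construction --- it is not restricted to propositional $\phi_i$ as you suggest, since the identity-relation extension works for arbitrary non-epistemic (including temporal) formulae. Your closing remark on why the fresh entity $e$ with $\Axiom{D}_e$ is indispensable --- that without it $G$-consistency collapses to satisfiability of each $\phi_i$ separately --- is a worthwhile observation that the paper does not make explicit.
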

\begin{proof}
	The satisfiability relation for the non-epistemic fragment is independent of the accessibility relations
	$\mapsto_e$, $e\in\BeliefEntities$ and, consequently, also independent of $G$. In particular, $\Phi$ is satisfiable if and only if there exists a model $M'=(\StateSpace,\Points, \Interp)$ and a run $r$ such that
	$(M',r)\models\Phi$. 
	
	Let $G=\{ e \mapsto_G e_i | e_i \text{ occurs in } \Sigma\}$ be a graph.
	
	Let us first assume that $G$ is a justification graph for $\Sigma$.
	Then according to Def. \ref{def:model} there exists a Kripke structure $M = (\StateSpace,\Points, \Interp, ({\mapsto_e})_{e\in\BeliefEntities})$ and a run $r$ such that $(M,r)\models_G e_i\::\phi_i$ for all $e_i\::\phi_i\in\Sigma$. Hence, we have $(M,r')\models_G \phi_i$ for all $r'$ with
	$r\mapsto_{e_i} r'$. Furthermore, from item (\ref{it:serial}) and (\ref{it:inherit}) of Def. \ref{def:model} we observe that ${\mapsto_e}$ is not empty and ${\mapsto_e}\subseteq {\mapsto_{e_1}}\cap\dots\cap{\mapsto_{e_n}}$. Hence, there exists at least one run $r'$ that satisfies all formulae in $\Phi$. Since $\Phi$ does not contain epistemic operators, we found a model $M'=(\StateSpace,\Points,\Interp)$ and a run $r'$
	such that $(M',r')\models\Phi$.
	
	For the other direction, let us assume that there exists some model $M'=(\StateSpace,\Points,\Interp)$ and a run $r$
	such that $(M',r)\models\Phi$. We extend $M'$ to a Kripke structure $M=(\StateSpace,\Points, \Interp, ({\mapsto_{e'}})_{e'\in\BeliefEntities})$
	by setting $r\mapsto_{e'} r'$ for all $e'\in\BeliefEntities$ if and only if $r=r'$ for all $r,r'\in\Pi$. Then $(M,r)\models_G e_i\::\phi_i$ for all $e_i\::\phi_i\in\Sigma$ since
	$(M',r') \models \phi_i$ holds for all $r \mapsto_{e_i} r'$ by construction of the accessibility relations.
	Moreover, since all accessibility relations are equal and reflexive, $\mapsto_e$ is serial.
\end{proof}

\section{Identifying and Analysing Conflicts}
\label{sec:section3}
\label{sec:casestudy}
In this section we first present an abstract algorithm for 
the conflict resolution of \autoref{sec:conflict} that starts at level \levelA/ and proceeds resolution stepwise up to level \levelD/. We then sketch our small case study where we applied an implementation of the abstract algorithm. 
\subsection{Analysing Conflicts}\label{sec:glue}
For the analysis of conflicts we employ SMT solvers.
\autoref{prop:SAT} reduces the satisfiability of a justification graph to a SAT problem. To employ SMT solving for conflict analysis, we encode the (real and possible) worlds of \autoref{sec:conflict} via logic formulae as introduced in \autoref{sec:justifications}. 
Each state $\state_i$ is represented
as a conjunction of literals, $\state_i \equiv \bigwedge v\land \bigwedge \neg v'$. 
Introducing a dedicated propositional variable $v_t$ for each $v\in\props$ and time step $t$ allows us to obtain a formula describing a finite run on \world. 
A predicate of the form $\bigwedge_{(\state,\state')\in\Edg} (\state_{t} \rightarrow \state'_{t+1})$ 
encodes the transition relation \Edg. 
The effect of performing an action $a_t$ at state \state is captured by a formula of the form $a_t \to (\state_t \rightarrow \state'_{t+1})$.
Using this we can encode a strategy $\strat$ in a formula $\spred_\strat$ such that its valuations represent runs of $\world$ according to \strat.
All runs according to $\strat$ achieve goals $\Phi$ if and only if $\psi_\strat \land \neg\Phi$ is unsatisfiable.
These logical encodings are the main ingredients for using a SAT solver for our conflict analysis. Since there are only finitely many possible strategies, we examine for each strategy which goals can be (maximally) achieved in a world \world or in a set of worlds \WorldSet. Likewise we check whether \proc has a winning strategy that is compatible with the strategies \proc believes \procB might choose.

Since we iterate over all possible worlds for our conflict analysis, we are
interested in summarising possible worlds. We are usually not interested in all $v_t$ -- e.g. the speed of \procB may at times $t$ be irrelevant. We are hence free to ignore differences in $v_t$ in different possible worlds and are even free to consider all valuations of $v_t$, even if \proc does not consider them  possible. 
This insight leads us to a symbolic representation of the possible worlds,
collecting the relevant constraints. Now the \systemgraph groups the constraints
that are relevant, with other words, $e\::\phi$ and $e'\::\neg\phi$ will not be
components of the same \systemgraph if the valuation of $\phi$ is relevant. In
the following we hence consider the maximal consistent set of possible worlds,
meaning encodings of possible worlds that are uncontradictory wrt.\ the relevant propositions, which are specified via the \systemgraph.

\subsection{Algorithmic approach}
\label{sec:overview}

In this section, we sketch an abstract algorithm for the conflict resolution at
levels \levelA/ to \levelD/ as in \autoref{sec:conflict}. Note that we do not
aim with \autoref{alg:findstrategy} for efficiency or optimal solutions but aim
to illustrate how satisfiability checks can be employed to analyse our
conflicts.

\begin{algorithm}[t]
	\begin{algorithmic}[1]
        \Function{\findstratt/}{$\baseinfoset/, \goalsetAmax/, \goalsetBmax/, \piactA/,
        \piactB/$}
        \State{$\worldsetA/ \gets \Call{\maxcw/}{\baseinfoset/, \piactA/,
	\piactB/}$}\label{f:l1}
        \Comment{construct set of possible worlds}
        \State{$\stratset/_{A} \gets \Call{\stratfunctA/}{\piactA/, \piactB/,
	\worldsetA/, \goalsetAmax/}$}\label{f:l2}
        \Comment{construct $\lb \strategyA/ \mid \pirun{\strategyA/,
        \worldsetA/} \models
        \goalsetA/ \text{ with } \goalsetA/ \in \goalsetAmax/ \rb$}
        % iterate over C
	\State{$\conflict/ \gets \emptyset$}\label{f:cc}
        \Comment{set of conflict causes}
        % iterate over strategies
        \ForAll{$\strategyA/ \in \stratsetA/$ with
        $\pirun{\strategyA/, \worldsetA/} \models \goalsetA/ \in
        \goalsetAmax/$}\label{algline:test}
        % strategy is not winning for all M
        \State $\pijustset/ \gets \Call{\testt/}{\strategyA/, \worldsetA/,
	\goalsetA/, \goalsetBmax/, \piactA/, \piactB/}$\label{f:l6}
        \Comment{cf.\ \autoref{alg:test}}
        \If{$\pijustset/ \neq \emptyset$}
        \Comment{\strategyA/ is not winning for all $\worlditemA/ \in
        \worldsetA/$,
        i.e.\ $\pirun{\strategyA/, \worldsetA/} \not\models \goalsetA/$}
	\State{$\conflict/ \gets \conflict/ \cup \lb \pijustset/ \rb$}\label{f:ec}
        \Comment{memorize justifications \pijustset/}
	\State{$\stratsetA/ = \stratsetA/ \setminus \lb \strategyA/ \rb$}\label{f:rmstrat}
        \EndIf
        
        \EndFor
	\If{$\stratsetA/ = \emptyset$}\label{algline:fix}\label{f:strats}
        \Comment{$A$ is in conflict with $B$}
        \For{$i \in [1,2,3,4]$}
        \Comment{traverse resolution levels}
        \State $\baseinfoset/', \goalsetAmax/{'}, \goalsetBmax/{'} \gets
        \Call{\fixt/}{\conflict/, \level/, \baseinfoset/, \goalsetAmax/,
        \goalsetBmax/}$ 
        \Comment{cf.\ \autoref{alg:fix}}
        \If {$(\baseinfoset/' \neq \baseinfoset/) \lor (\goalsetAmax/{'} \neq
        \goalsetAmax/) \lor (\goalsetBmax/{'} \neq
        \goalsetBmax/)$}\label{algoline:fixedpoint}
        \Comment{new information generated}
        \State $\stratsetA/ \gets \Call{\findstratt/}{\baseinfoset/',
        \goalsetAmax/{'}, \goalsetBmax/{'}, \piactA/, \piactB/}$
        \Comment{new attempt with new information}
        \If{$\stratsetA/ \neq \emptyset$}\label{algoline:success}
        \Comment{new attempt was successful, stop and return}
        \State \breakt/
        \EndIf
        \EndIf
        \EndFor
        \EndIf
        \State \Return $\stratsetA/$ \Comment{select $\strategyA/ \in
        \stratsetA/$ to reach some goal in \goalsetAmax/}
    \EndFunction
\end{algorithmic}
    \caption{Determining winning strategy based on observations, goals, and
    possible actions.}
    \label{alg:findstrategy}\label{a:f}
\end{algorithm}

The following algorithms describe how we deal with logic formulae encoding sets
of possible worlds, sets of runs on them, etc.\ to analyse conflicts
(cf.~\autoref{def:conflict}, \autopageref{def:conflict}) via SMT solving.  We
use  $\worlditem/$ to refer to a formula that encodes a maximal consistent set
of possible worlds (cf.~\autoref{sec:glue}), i.e., that corresponds to a
\systemgraph.  We use $\worldset/$ to refer to a set of formulas
$\worlditem/\in\worldset/$ that encode the set of possible worlds $\WorldSet$
structured into sets of possible worlds via \systemgraphs.  We use $\WorldSet$
and $\worldset/$ synonymously.  Also we often do not distinguish between
$\worlditem/$ and $\world$ -- neglecting that $\worlditem/$ represents a set of
worlds that are like $\world$ wrt to the relevant constraints.

\begin{figure}
    \centering
    \includegraphics[scale=0.7]{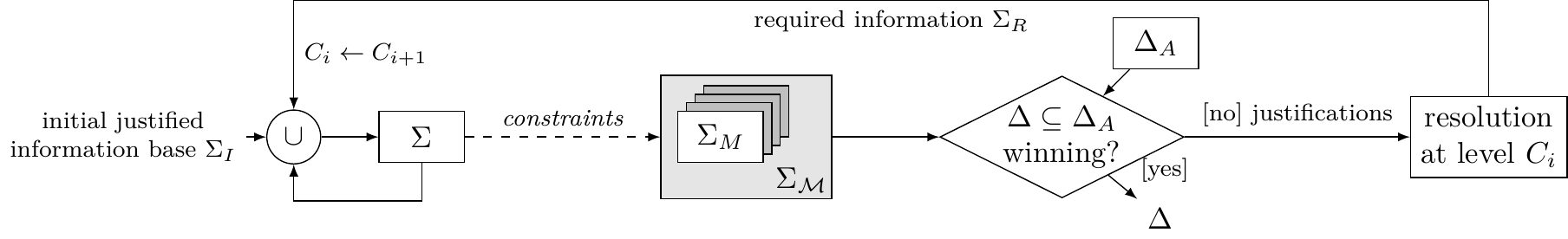}
    \caption{Abstract resolution process with information base, possible
    worlds and strategies.}
    \label{fig:overview}
\end{figure}

\autoref{fig:overview} provides an overview of the relation between the initial
information base $\baseinfoset/_{I}$ of agent $A$, its set \worldsetA/ of
possible worlds \worlditemA/, winning strategies, resolution, and stepwise
update of the information $\baseinfoset/_{R}$ during our conflict resolution
process.  The initial information base defines the set of possible worlds
\WorldSet, which is organised in sets of maximal consistent worlds
$\worlditem/$.  Based on~\worldsetA/, $A$'s set of strategies $\Delta_A$ is
checked whether it comprises a winning strategies in presence of an agent $B$
that tries to achieve its own goals. If no such winning strategy exists, $A$
believes to be in conflict with $B$. At each level \level/ the resolution
procedure tries to determine information $\baseinfoset/_{R}$ of level \level/ to
resolve the conflict. If the possible worlds are enriched by this information,
the considered conflict vanishes. The new information is added to the existing
information base and the over-all process is re-started again until either
winning strategies are found or $\baseinfoset/_{R}$ is empty.

\myparagraph{How to find a believed winning strategy}

\autoref{alg:findstrategy} finds a winning strategy of agent $A$  for a goal
$\goalsetA/$ in $\worldsetA/$ tolerating that $B$ follows an arbitrary winning
strategy in $\worlditem/\in\worldset/$ for its goals, i.e.\ it finds a strategy
that satisfies $\goalsetA/$ in all possible worlds $\worldsetA/$ where
$\goalsetA/$ is maximal for $\worldsetA/$ and in each possible world
$\worlditemA/\in\worldsetA/$ agent $B$ may also follow a winning strategy for
one of its maximal goals $\goalsetB/$. If such a strategy cannot be found, $A$
believes to be in conflict with $B$ (cf.~\autoref{def:conflict}).

Input for the algorithm is (i) a set \baseinfoset/ of formulae describing the
current belief of $A$, e.g.\ its current observations and its history of beliefs
--we call it the \emph{information base} in the sequel--,  (ii) a set of goals
\goalsetAmax/ of $A$ that is maximal in $\WorldSet$, (iii) a set of believed
goals \goalsetBmax/ of $B$ that is maximal for a $\worlditem/ \in \worldsetA/$,
(iv) a set of possible actions \piactA/ for $A$ and  (v) a set of believed
possible actions \piactB/ for~$B$.

First (\lref{f:l1} of \autoref{a:f}) is to construct sets of maximal consistent
sets of possible worlds that together represent \WorldSet. In \lref{f:l2} the
set \stratsetA/ is determined, which is the set of strategies accomplishing a
maximal goal combination for $A$ assuming $B$ agrees to help, i.e., all winning
strategies \strategyA/ that satisfy $\goalsetA/\in\goalsetAmax/$ in all possible
worlds $\worlditemA/ \in \worldsetA/$, where $\goalsetA/$ is maximal for
$\WorldSet$.

\begin{algorithm}[t]
    \begin{algorithmic}[1]
        \Function{\testt/}{$\strategyA/, \worldsetA/, \goalsetA/, \goalsetBmax/,
        \piactA/, \piactB/$}

	\ForAll{$\worlditemA/ \in \worldsetA/$}\label{t:allworlds}
        \State{$\stratset/_{B} \gets \Call{\stratfunctB/}{\piactA/, \piactB/,
	\worlditemA/, \goalsetBmax/}$}\label{t:maxstrat}
        \Comment{construct $\lb \strategyB/ \mid \pirun{\strategyB/,
        \worlditemA/} \models \goalsetB/ \text{ with } \goalsetB/ \in
        \goalsetBmax/ \rb$}
        \ForAll{$\strategyB/ \in \stratsetB/$}
        \ForAll{$\goalsetB/ \in \goalsetBmax/ 
	\text{ with } \pirun{\strategyB/, \worlditemA/} \models \goalsetB/ \text{ and } \goalsetB/\text{ is maximal in }\worlditemA/$}
	\If{$ \pirun{\strategyAB/,\worlditemA/} \not \models \goalsetA/ \cup \goalsetB/$}\label{t:nowin}
        \Comment{$\strategyA/$ is not winning for all $M$ and all $\strategyB/$}
        % memoise the problem and drop strategy
        \State \Return {$\Call{\getjust/}{\strategyAB/
	\not\models \goalsetA/ \cup \goalsetB/}$}\label{t:getJ}
        \Else
        \State \Return $\emptyset$
        \EndIf
        % end: strategy not winning
        \EndFor
        \EndFor
        \EndFor
        \EndFunction
\end{algorithmic}
    \caption{Test if a strategy is winning in all possible worlds.}
    \label{alg:test}
\end{algorithm}

In lines~\ref*{algline:test}~ff.\ we examine whether one of $A$'s strategies
(where $B$ is willing to help) works even when $B$ follows its strategy to
achieve one of its maximal goals $\goalsetB/ \in \goalsetBmax/$ in \worlditemA/.

To this end \testtt/ is called for all of $A$'s winning strategies
$\strategyA/\in\stratsetA/$ (\lref{f:l6}). The function \testtt/ performs this
test iteratively for one maximal consistent set of worlds~$\worlditemA/$
(\autoref{alg:test} \lref{t:allworlds}). Let $\stratsetB/$  be the set of joint strategies achieving a goal $\goalsetB/ \in
\goalsetBmax/$ that is maximal in $\worlditemA/$.  
We check the compatibility of $A$'s strategy $\strategyA/$ to
every $\strategyB/\in\stratsetB/$ (\autoref{alg:test}~\lref{t:maxstrat}). A strategy of $A$ $\strategyA/$ is
compatible to all of $B$'s strategies $\strategyB/$ if all joint strategies
$\strategyAB/$ achieve the maximal goals for $A$ and $B$
(\autoref{alg:test}~\lref{t:nowin}).{\footnote{Note that according to \autoref{formal}, we have
$\goalsetB/ = \True$ if $B$ cannot achieve any goal.  This reflects that $A$
cannot make any assumption about $B$'s behaviour in such a situation.}}
 If the joint strategy \strategyAB/ is not a
winning strategy for the joint goal $\goalsetA/ \cup \goalsetB/$
(\autoref{alg:test} \lref{t:nowin}), the function \getjustt/ extracts the set of
justifications for this conflict situation (\autoref{alg:test} \lref{t:getJ}).
The set of justifications is added to
the set of conflict causes \conflict/ (\autoref{alg:findstrategy} \lref{f:ec}.). Since strategy \strategyA/ is not
compatible to all of $B$'s strategies, it is hence not further considered as a possible conflict-free strategy for $A$ 
(\autoref{a:f} \lref{f:rmstrat}).

\begin{algorithm}[t]
    \begin{algorithmic}[1]
    \Function{\fixt/}{$\conflict/, \level/, \baseinfoset/, \goalsetAmax/,
    \goalsetBmax/$}
        \For{$\pijustset/ \in \conflict/$}
        \State{$\conflict/ \gets \conflict/ \setminus \lb \pijustset/ \rb$}
        \State{$\baseinfoset/, \goalsetAmax/, \goalsetBmax/ \gets
        \Call{\resolve/}{\baseinfoset/, \pijustset/, \goalsetAmax/,
        \goalsetBmax/, \level/}$}
        \Comment{try resolution according to level \level/}
        \EndFor
        \State \Return $\baseinfoset/, \goalsetAmax/, \goalsetBmax/$ 
    \EndFunction
\end{algorithmic}
    \caption{Try to fix a conflict by resolving contradictions.}
    \label{alg:fix}
\end{algorithm}

A strategy that remains in \stratsetA/ at \autoref{a:f} \lref{f:strats} is a
winning strategy for one of $A$'s goals in all possible worlds~\worlditemA/
regardless of what maximal goals $B$ tries to achieve in~\worlditemA/. However,
if \stratsetA/ is empty at \autoref{a:f} \lref{f:strats} , $A$ is in a
(believed) conflict with $B$ (\autoref{def:conflict}). In this case, conflict
resolution is attempted (cf.~lines~\ref*{algline:fix}~ff.\ in
\autoref{alg:findstrategy}). Function \fixtt/ from \autoref{alg:fix} is called
with the set of conflict causes, the current conflict resolution level, and the
current information base and goals. For each conflict cause, an attempt of
resolution is made by function \resolvet/. The conflict is analysed to identify
whether adding/updating information of the current resolution level helps to
resolve the conflict. If there are several ways to resolve a conflict,
justifications can be used to decide which resolution should be chosen.  Note
that conflict resolution hence means updating of the information base
\baseinfoset/ or goal sets \goalsetAmax/ and \goalsetBmax/. 

Line~\ref*{algoline:fixedpoint} of \autoref{a:f} checks if some new information
was obtained from the resolution procedure. If not, resolution will be restarted
at the next resolution level. If new information was obtained, \findstrattt/ is
called with the updated information. If the result is a non-empty set of
strategies, the algorithm terminates by returning them as (believed) winning
strategies for $A$.  However, if the result is the empty set, resolution is
restarted at the next resolution level. If \stratsetA/ is empty at level
\levelD/, the conflict cannot be resolved and the algorithm terminates.

%-------------------------------------------------------------------------------

\myparagraph{Termination}

\autoref{alg:findstrategy} eventually terminates under the following
assumptions. The first assumption is that the set of variables \pivars/ and
hence the information base \baseinfoset/ is finite. In this case, the
construction of maximal consistent possible worlds \worldsetA/ terminates since
there is a finite number of possible consistent combinations of formulae and the
time horizon for the unrolling of a possible world \worlditemA/ is bounded.

Together with finite sets \goalsetAmax/, \goalsetBmax/, \piactA/, and \piactB/,
the construction of strategies, i.e.\ functions \stratfuncttA/ and
\stratfuncttB/, terminates since there are only finite numbers of combinations of
input histories and output action and there is only a finite number of goals to
satisfy.  All loops in algorithms~\ref{alg:findstrategy}, \ref{alg:test}, and
\ref{alg:fix} hence iterate over finite sets.

The extraction of justifications terminates since runs are finite and
consequently the number of actions involved in the run, too. Furthermore, for
each state in a run, there are only a finite number of propositions that apply.
Together with a finite number of propositions representing the goal, \getjustt/
can simply return the (not necessarily minimal) set of justifications from
all these finite many formulae as a naive approach.

\autoref{alg:findstrategy} terminates if a non-empty set \stratsetA/ is derived
by testing and/or resolution, or if a fixed point regarding \baseinfoset/,
\goalsetAmax/, and \goalsetBmax/ is reached. Since all other loops and functions
terminate, the only open aspect is the fixed point whose achievement depends on
\resolvet/. We assume that \autoref{alg:findstrategy} is executed at a fixed
time instance s.t.\ $A$'s perception of the environment does not change during
execution. Thus, \baseinfoset/~contains only a finite number of pieces of
information to share. If we assume that that sharing information leads only to
dismissing possible worlds rather then considering more worlds possible, then
this a monotonic process never removing any information.\footnote{Otherwise the
set of already examined worlds can be used to define a fixed point.}
Furthermore, we assume that the partial order of goals leads, if necessary, to a
monotonic process of goal negotiation which itself can repeated finite many
times until no further goals can be sacrificed or adopted from $B$. Thus, if
\stratsetA/ remains to be the empty set in line~\ref*{algoline:success}, the
fixed point will eventually be reached.

Furthermore, we do not consider any kind of race conditions occurring from
concurrency, e.g.\ deadlock situations where $A$ can't serve $B$'s request
because it does not know what its strategy will be since $A$ wait's for $B$'s
respond, and vice versa.

So in summary, the algorithm terminates under certain artificial assumptions but
cannot determine a resolution in case without an outside arbiter. In practice
such a conflict resolution process has to be equipped with time bounds and
monitors. We consider these aspects as future work.

%-------------------------------------------------------------------------------

\subsection{Case study}

We implemented the algorithm sketched above in a Java program employing Yices
\cite{Yices} to determine contradictions and analysed variations of a toy
example to evaluate and illustrate our approach.

We modelled a system of two agents on a two lane highway. Each agent is
represented by its position and its lane. Each agent has a set of actions: it
can change lane and drive forward with different speeds. We captured this via a
discrete transition relation where agents hop from position to position.  The
progress of time is encoded via unrolling, that is we have for each point in
time a corresponding copy of a variable to hold the value of the respective
attribute at that time. Accordingly the transition relation then refers to
these copies.

Since we analyse believed conflicts of an agent, we consider several worlds. In
other words, we consider several variations of a Yices model. Each variation
represents a \systemgraph summarising the maximal consistent set of evidences
and thereby representing a set of worlds which is justified by this set of
evidences. 

We modify the Yices file by adding additional constraints according to the
algorithm \autoref{sec:overview}.  For the steps \levelA/ to \levelD/ we add constraint
predicates, e.g., that encode that information about certain observations have
been communicated by say $B$ to $A$, constraints that specify that $B$ tells $A$
it will decelerate at step $4$ and constraints that encode goal combinations. 

We employed Yices to determine whether there is conflict. The key observation
is: If Yices determines that it holds that $\neg \goalp$ is satisfiable in our
system model, then there is the possibility that the goal is not achieved --
otherwise each evolution satisfies $\goalp$ and there is a winning strategy for
the model.

\section{Related work}
\label{sec:related}
\myparagraph{Studying Traffic Conflicts}
According to Tiwari in his 1998 paper \cite{Tiwari} studying traffic conflicts in India, one of the earliest studies concerned with \emph{traffic conflicts} is the 1963 paper \cite{Harris} of Perkins and Harris. It aims to predict crashes  in road traffic and to obtain a better insight to causal factors.
The term \emph{traffic conflict}  is commonly used according to \cite{Tiwari} as \enquote{an observable situation in which two or more road users approach each other in space and time to such an extent that a collision is imminent if their movements remain unchanged} \cite{Amundsen}. 
In this paper we are interested in a more general and formal notion of conflict. 
We are not only interested in collisions-avoidance but more generally in situations where traffic participants have to cooperate with each other in order to achieve their goals -- which might be collision-freedom. Moreover, we aim to provide a formal framework that allows to explain real world observations as provided by, e.g., the studies of \cite{Tiwari,Harris}.

Tiwari also states in \cite{Tiwari} that it is necessary to develop a better understanding of conflicts and conjectures that \emph{illusion of control} \cite{Langer} and \emph{optimism bias theories} like in \cite{Dejoy} might explain fatal crashes. 
In this paper we develop a formal framework that allows us to analyse conflicts based on beliefs of the involved agents, --although supported by our framework--we here do not compare the real world evolution with the evolution that an agent considers possible. 
Instead we analyse believed conflicts, that are conflicts which an agent expects to occur based on its beliefs. 
Such conflicts will have to be identified and analysed by prediction components of the autonomous vehicles architecture, especially in settings where misperception and, hence, wrong beliefs are possible.

In \cite{SimConflict} Sameh et al. present their approach to modelling conflict resolution as done by humans in order to generate realistic traffic simulations. The trade-off between anticipation and reactivity for conflict resolution is analysed in \cite{ConvexConflict} in order to determine trajectories for vehicles at an intersection.
Both works \cite{SimConflict,ConvexConflict} focus on conflicts leading to accidents. Regarding the suggested resolution approaches, our resolution process suggests cooperation steps with increasing level cooperation. This resolution process is tailored for autonomous vehicles  that remain autonomous during the negation process.   

\myparagraph{Strategies and Games}
For strategy synthesis Finkbeiner and Damm \cite{perimeter} determined the right perimeter of a world model. 
The approach aims to determine the right level of granularity of a world model allowing to find a remorse-free dominant strategy. In order to find a winning (or remorse-free dominant) strategy, the information of some aspects of the world is necessary to make a decision. We accommodated this as an early step in our resolution protocol. Moreover in contrast to \cite{perimeter}, we determine information that agent $A$ then want requests from agent $B$ in order to resolve a conflict with $B$ -- there may still be no winning (or remorse-free dominant) strategy for all goals of $A$. In  \cite{WYRNTKAYN}  Finkbeiner\,et.\,al.presented an approach to synthesise a cooperative strategy among several processes, where the lower prioritised process sacrifices its goals when a process of higher priority achieves its goals. In contrast to \cite{WYRNTKAYN} we do not enforce a priority of agents but leave it open how a conflict is resolved in case not all their goals are achievable. Our resolution process aims to identify the different kinds of conflict as introduced in \autoref{sec:conflict} that arise when local information and beliefs are taken into account and which not necessarily imply that actually goals have to be sacrificed.  

We characterize our conflict notion in a game theoretic setting by considering the environment of agents $A$ and $B$ as adversarial and compare two scenarios where (i) the agent $B$ is cooperative (angelic) with the scenario where (ii)  $B$ is not cooperative and also not antagonistic but reasonable in following a strategy to achieve its own goals. As Brenguier et al. in \cite{Brenguier2017} remark, a fully adversarial environment (including $B$) is usually a bold abstraction.
By assuming in (ii) that $B$ maximises its own goals -- we assume that $B$ follows a winning strategy for its maximal accomplishable goals. So we are in a similar mind set than at assume-guarantee \cite{Henzinger} and assume-admissible \cite{Brenguier2017}  synthesis. Basically we consider the type of strategy (winning/admissible/dominant) as exchangeable, the key aspect of our definition is that goals are not achievable but can be achieved with the help of the other.

\myparagraph{Logics}
Justification logic was introduced in~\cite{artemov:jck:2006,
artemov:logic-justification:2008} as an epistemic logic incorporating knowledge
and belief modalities into justification terms and extends classical modal logic by Plato’s characterisation of knowledge as justified true belief. However,
even this extension might be epistemologically insufficient as Gettier already pointed out in 1963 \cite{gettier1963justified}.
In \cite{artemov2005} a combination of justification logics and epistemic logic
is considered with respect to common knowledge. The knowledge modality $K_i$
of any agent $i$ inherits all information that are justified by some
justification term $t$, i.e.\ $t\::\phi \to K_i\phi$. In such a setting
any justified information is part of common knowledge. Moreover, 
justified common knowledge is obtained by collapsing all justification terms
into one modality $J$ and can be regarded as a special constructive
sort of common knowledge. While our approach neglects the notion of
common information, we use a similar inheritance principle where
a belief entity inherits information of its components, cf.\ Def.\
\ref{def:axioms}.(\ref{it:inheritance}). A comparison of the strength
of this approach with different notions of common knowledge can be found in
\cite{antonakos2007justified}.
While justification logic and related approaches \cite{fagin1987,artemov2009logical}, aim to restrict the principle of logical awareness and the related notion of logical omniscience, we argue in Sec. \ref{subsec:justifications} that
the principle of logical awareness as provided by modal logic is indispensable
in our approach.
A temporal (LTL-based) extension of justification logic has been sketched
in~\cite{bucheli:temporal-justification:2017}. This preliminary work differs
from our approach wrt.\ the axiom systems used for the temporal logic part and
the justification / modal logic part, cf.\ the logic of \systemgraphs
axiomatised in Section~\ref{subsec:justifications}. Our logic and
its axiomatisation incorporates a partial order on the set of beliefs that
underlies their prioritization during conflict resolution, which contrasts with
the probabilistic extension of justification logic outlined
in~\cite{kokkinis:probabilistic-justification:2016}.

\section{Conclusion}
\label{sec:conclusion}
Considering local and incomplete information, we presented a new notion of conflict that captures situations where an agent believes it has to  cooperate with another agent. 
We proposed steps for conflict resolution with increasing level of cooperation.
Key for conflict resolution is the analysis of a conflict, tracing and identifying contradictory evidences. To this end we presented a formal logical framework
unifying justifications with modal logic. Alas, to the authors' best knowledge
there are no efficient satisfiability solvers addressing distributed information
so far. However, we exemplified the applicability of our framework in a
restricted but non-trivial setting. On the one hand, we plan to extend
this framework by efficient implementations of adapted satisfiability solvers,
on the other hand by integrating
richer logics addressing decidable fragments of first order logic,
like linear arithmetic, and probabilistic reasoning.

\bibliographystyle{unsrt}
\bibliography{ms}

\end{document}